\title{State Complexity of Catenation Combined with Star and
Reversal\thanks{This work is supported by Natural Science and
Engineering Council of Canada Discovery Grant R2824A01, Canada
Research Chair Award, and Natural Science and Engineering Council of
Canada Discovery Grant 41630. All correspondence should be directed to Bo Cui at
bcui2@csd.uwo.ca.}}
\author{Bo Cui, Yuan Gao, Lila Kari, and Sheng Yu
\institute{Department of Computer Science, \\
The University of Western Ontario,\\
London, Ontario, Canada N6A 5B7}
\email{\{bcui2,ygao72,lila,syu\}@csd.uwo.ca}
}
\newtheorem{theorem}{Theorem}
\newtheorem{corollary}{Corollary}
\newtheorem{lemma}{Lemma}
\begin{document}
\maketitle

\begin{abstract}
This paper is a continuation of our research work on state complexity of combined operations.
Motivated by applications,
we study the state complexities of two particular combined operations:
catenation combined with star and catenation combined with reversal.
We show that the state complexities of both of these combined operations are considerably less than
the compositions of the state complexities of their individual participating operations.
\end{abstract}

\section{Introduction}
It is worth mentioning that in the past 15 years, a large number of
papers have been published on state complexities of individual
operations, for example, the state complexities of basic operations
such as union, intersection, catenation, star, etc.
\cite{DoOk09,HoKu02,JiJiSz05,Jriaskova05,PiSh2002,SaWoYu04,Yu01,YuZhSa94},
and the state complexities of several other operations such as
shuffle, orthogonal catenation, proportional removal, and cyclic
shift~\cite{CaSaYu02,DaDoSa08,Domaratzki02,JiOk05}. However, in
practice, it is common that several operations, rather than only a
single operation, are applied in a certain order on a number of
finite automata. The state complexity of combined operations is
certainly an important research direction in state complexity
research. The state complexities of a number of combined operations
have been studied in the past two years. It has been shown that the
state complexity of a combination of several operations are usually
not equal to the composition of the state complexities of individual
participating operations~\cite{GaSaYu08,JiOk07,LiMaSaYu08,SaSaYu07}.

In this paper, we study the state complexities of catenation combined with star, i.e., $L_1 L_2^*$,
and reversal, i.e., $L_1L_2^R$, respectively, where $L_1$ and $L_2$ are regular languages.
These two combined operations are useful in practice.
For example, the regular expressions that match URLs can be summarized as $L_1L_2^*$.
Also, the state complexity of $L_1L_2^R$ is equal to that of catenation combined with
{\it antimorphic involution} $(L_1 \theta(L_2))$ in biology.
An involution function $\theta$ is such that $\theta^2$ equals the identity function.
An antimorphic involution is the natural formalization of the notion of Watson-Crick complementarity
in biology.
Moreover, the combination of catenation and antimorphic involution can naturally formalize
a basic biological operation, primer extension.
Indeed, the process of creating the Watson-Crick complement of a DNA single strand $w_1 w_2$
uses the enzyme DNA polymerase to extend a known short primer $p=\theta(w_2)$
that is partially complementary to it, to obtain $\theta(w_2) \theta( w_1) = \theta (w_1 w_2)$.
This can be viewed as the catenation between the primer $p$ and $\theta(w_1)$.
The reader is referred to~\cite{Amos05} for more details about biological definitions and operations.

It has been shown in~\cite{YuZhSa94} that (1) the state complexity of
the catenation of an $m$-state DFA language (a language accepted by an $m$-state minimal complete DFA)
and an $n$-state DFA language is $m2^{n} - 2^{n-1}$,
(2) the state complexity of the star of a $k$-state DFA language, where the DFA contains
at least one final state that is not the initial state, is $2^{k-1}+2^{k-2}$,
and (3) the state complexity of the reversal of an $l$-state DFA language is $2^l$.
In this paper, we show that the state complexities of $L_1L_2^*$ and $L_1L_2^R$
are considerably less than the compositions of their individual state complexities.
Let $L_1$ and $L_2$ be two regular languages accepted by two complete DFAs of
sizes $p$ and $q$, respectively.
We will show that, if the $q$-state DFA has only one final state which is also its initial state,
the state complexity of $L_1 L_2^*$ is $p2^q - 2^{q-1}$;
in the other cases, that is when the $q$-state DFA contains some final states
that are not the initial state,
the state complexity of $L_1 L_2^*$ is $(3p - 1) 2^{q-2}$.
This is in contrast to the composition of state complexities of catenation and star
that equals $(2p - 1) 2^{2^{q-1}+2^{q-2}-1}$.
We will also show that the state complexity of $L_1L_2^R$ is $p2^q - 2^{q-1} - p + 1$
instead of $p 2^{2^q} - 2^{2^q - 1}$, the composition of state complexities of catenation and reversal.

The paper is organized as follows. We introduce the basic notations
and definitions used in this paper in the following section. Then,
we study the state complexities of catenation combined with star and
reversal in Sections~\ref{sec:cat-star} and~\ref{sec:cat-rev},
respectively. Due to page limitation, we omit the proofs of
Lemma~\ref{lem:specialcase}, Lemma~\ref{lem:cat-star-m=1},
Lemma~\ref{lem:sc-theta}, Lemma~\ref{lem:sc-cat-theta},
Theorem~\ref{L_1 L_2^R lower bound m=1 n>=3}, and Lemma~\ref{L_1
L_2^R lower bound m=1 n=2}. We also omit the proof of
Theorem~\ref{thm:cat-star-lower} for the case when $m \ge 2$ and $n
\ge 3$. We conclude the paper in Section~\ref{sec:conclusion}.

\section{Preliminaries}

An alphabet $\Sigma$ is a finite set of letters.
A word $w \in \Sigma^*$ is a sequence of letters in $\Sigma$,
and the empty word, denoted by $\lambda$, is the word of 0 length.

An involution $\theta : \Sigma \rightarrow \Sigma$ is a function such that $\theta^2 = I$
where $I$ is the identity function and can be extended to an antimorphic involution
if, for all $u,v \in \Sigma^*$, $\theta(uv) = \theta(v)\theta(u)$.
For example, let $\Sigma = \{a,b,c\}$ and define $\theta$ by $\theta(a) = b, \theta(b) = a,
\theta(c) = c$, then $\theta(aabc) = cabb$.
Note that the well-known DNA Watson-Crick complementarity is a particular antimorphic involution
defined over the four-letter DNA alphabet, $\Delta = \{A,C,G,T\}$.

A {\it non-deterministic finite automaton} (NFA) is a quintuple
$A = (Q, \Sigma, \delta, s, F)$, where $Q$ is a finite set of states,
$s \in Q$ is the start state, and $F \subseteq Q$ is the set of final states,
$\delta: Q \times \Sigma \to 2^Q$ is the transition function.
If $|\delta(q, a)| \le 1$ for any $q \in Q$ and $a \in \Sigma$, then this automaton
is called a {\it deterministic finite automaton} (DFA).
A DFA is said to be complete if $\delta(q,a)$ is defined for all $q \in Q$ and $a \in \Sigma$.
All the DFAs we mention in this paper are assumed to be complete.
We extend $\delta$ to $Q \times \Sigma^* \rightarrow Q$ in the usual way.
Then the automaton accepts a word $w \in \Sigma^*$ if $\delta(s,w) \cap F \neq \emptyset$.
Two states $p, q \in Q$ are equivalent if the following condition holds:
$\delta(p, w)\in F$ if and only if $\delta(q, w)\in F$ for all words $w\in \Sigma^*$.
It is well-known that a language which is accepted by an NFA can be accepted by a DFA,
and such a language is said to be {\it regular}.
The language accepted by a finite automaton $A$ is denoted by $L(A)$.
The reader is referred to~\cite{HoMoUl01,Yu97} for more details about regular languages
and finite automata.

The {\it state complexity} of a regular language $L$, denoted by $sc(L)$,
is the number of states of the minimal complete DFA that accepts $L$.
The state complexity of a class $S$ of regular languages, denoted by $sc(S)$,
is the supremum among all $sc(L)$, $L \in S$.
The state complexity of an operation on regular languages is the state complexity of
the resulting language from the operation as a function of the state complexities of
the operand languages.
For example, we say that the state complexity of the intersection of an $m$-state DFA language
and an $n$-state DFA language is exactly $mn$.
This implies that the largest number of states of all the minimal complete DFAs
that accept the intersection of two languages accepted by two DFAs of sizes $m$ and $n$, respectively,
is $mn$, and such languages exist.
Thus, in a certain sense, the state complexity of an operation is a worst-case complexity.

\section{Catenation combined with star}\label{sec:cat-star}
In this section, we consider the state complexity of catenation combined with star.
Let $L_1$ and $L_2$ be two languages accepted by two DFAs of sizes $m$ and $n$, respectively.
We notice that, if the $n$-state DFA has only one final state which is also its initial state, this DFA also accepts $L_2^*$.
Thus, in such a case, an upper bound for the number of states of any DFA that accepts $L_1 L_2^*  = L_1L_2$ is given by the state complexity of catenation as $m 2^{n} - 2^{n-1}$.
We first show that this upper bound is reachable by some DFAs of this form (Lemma~\ref{lem:specialcase}).
Then, we consider the state complexity of $L_1L_2^*$ in the other cases, that is when the $n$-state DFA contains some final states that are not the initial state.
We show that, in such cases, the upper bound (Theorem~\ref{thm:cat-star-upper}) coincides with the lower bound (Theorem~\ref{thm:cat-star-lower}).

\begin{lemma}\label{lem:specialcase}
For any $m \ge 2$ and $n \ge 2$, there exists a DFA $A$ of $m$
states and a DFA $B$ of $n$ states, where $B$ has only one final
state that is also the initial state, such that any DFA accepting
the language $L(A) L(B)$, which is equal to $L(A)L(B)^*$, needs at
least $m2^n-2^{n-1}$ states.
\end{lemma}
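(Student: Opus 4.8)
The plan is to produce explicit witnesses and then carry out a reachability-and-inequivalence analysis on the product automaton for catenation. First note the reduction already indicated in the statement: since the unique final state of $B$ will be its initial state $s_B$, the language $L(B)=\{w\in\Sigma^*:\delta_B(s_B,w)=s_B\}$ contains $\lambda$ and is closed under catenation, so $L(B)^*=L(B)$ and it suffices to force $L(A)L(B)$ to need $m2^n-2^{n-1}$ states. Because $A$ will have a single final state, the standard product construction for catenation~\cite{YuZhSa94} gives a DFA $C$ for $L(A)L(B)$ whose reachable states lie among the \emph{admissible pairs} $(i,S)$ with $i\in Q_A$, $S\subseteq Q_B$, subject to $s_B\in S$ whenever $i$ is final; there are exactly $(m-1)2^n+2^{n-1}=m2^n-2^{n-1}$ of them, so it remains to choose the witnesses so that every admissible pair is reachable in $C$ and no two admissible pairs are equivalent. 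For the witnesses I would decouple the two components: over $\Sigma=\{a,b,c\}$, let $A=(\{0,\dots,m-1\},\Sigma,\delta_A,0,\{m-1\})$ with $a$ the cyclic permutation $i\mapsto i+1\bmod m$ and $b,c$ the identity on $Q_A$, and let $B=(\{0,\dots,n-1\},\Sigma,\delta_B,0,\{0\})$ with $a$ the identity on $Q_B$, $b$ the cyclic permutation $j\mapsto j+1\bmod n$, and $c$ the map sending $1$ to $0$ and fixing the other states. Both DFAs are easily seen to be minimal, $B$ has its initial state as its only final state, and in $C$ the $B$-coordinate is untouched by every letter except that $0$ is re-inserted into $S$ exactly at the steps where the $A$-coordinate reaches $m-1$.

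Reachability proceeds in stages. Powers of $a$ from the start pair $(0,\emptyset)$ give $(i,\emptyset)$ for $0\le i\le m-2$ and then $(m-1,\{0\})$; applying $b$ repeatedly at the $A$-state $m-1$ (each step re-inserting $0$) yields $(m-1,\{0,1,\dots,k\})$ for all $k$, hence $(m-1,Q_B)$, and one further $a$ gives $(0,Q_B)$. From $(0,Q_B)$ the $A$-coordinate stays parked off $m-1$ while words over $\{b,c\}$ reshape the $B$-coordinate with no re-insertions; using $b$ to rotate and $c$ to merge a state $q$ into $q-1$, and deleting the unwanted states in decreasing order, one reaches $(0,S)$ for every $S\subseteq Q_B$. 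Finally suitable powers of $a$ carry $(0,S)$ to $(i,S)$ for $1\le i\le m-2$, and, when $0\in S$, to $(m-1,S)$. Thus all $m2^n-2^{n-1}$ admissible pairs are reachable.

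For inequivalence, the basic gadget is the word $b^{\,n-q}$, which maps $q$ (and only $q$) onto $0$; so, after parking the $A$-coordinate at a non-final state, reading $b^{\,n-q}$ drives $C$ into an accepting pair iff $q$ was in the current $B$-coordinate. Now take two distinct admissible pairs $(i,S)\neq(j,T)$. If they differ in whether $s_B$ lies in the second coordinate, $\lambda$ separates them. If $S\neq T$, pick $q$ in the symmetric difference (necessarily $q\neq s_B$), read a short prefix over $\{a,b,c\}$ that parks both $A$-coordinates off $m-1$ while keeping track of which $B$-coordinate still contains $q$, and then read $b^{\,n-q}$. If $S=T$ but $i\neq j$, first use $\{b,c\}$-words (which leave the $A$-coordinates fixed) to shrink the common second coordinate to the singleton $\{1\}$, then read the least power of $a$ that drives the larger of $i,j$ onto $m-1$ without routing the smaller one through $m-1$, so that exactly one branch receives a re-inserted $0$ and becomes accepting.

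I expect this last step to be the main obstacle: one must keep precise account of how the $A$-coordinate's passages through $m-1$ re-insert $s_B$ into the second coordinate along each branch, and the cramped cases $m=2$ and $m=3$ (where parking the $A$-coordinate away from its unique final state for long enough is delicate) have to be handled separately. The verification that every subset of $Q_B$ arises as an image under $\{b,c\}$-words in the reachability stage also needs a little care, but is routine.
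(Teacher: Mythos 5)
The paper omits its own proof of this lemma, so there is nothing to compare against directly; judging your argument on its own terms, the strategy and the witnesses are sound, but two concrete steps do not go through as written. First, the reachability claim that from $(0,Q_B)$ one obtains every $S\subseteq Q_B$ by ``deleting the unwanted states in decreasing order'' is false as stated: deleting $q$ (rotate $q$ to position $1$, apply $c$, rotate back) has the side effect of inserting $q-1 \bmod n$. For $n=3$ and target $S=\{1\}$, the unwanted set is $\{0,2\}$; deleting $2$ first and then $0$ re-inserts $2=n-1$, so you land on $\{1,2\}$, not $\{1\}$. The fix is to delete within each maximal cyclic run of consecutive unwanted states from the top of the run downward (equivalently, always delete $q$ before $q-1$ when both are unwanted); this ordering exists precisely because $S\neq\emptyset$, and the pairs $(i,\emptyset)$ are reached separately by $a^i$ anyway. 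Second, the inequivalence argument genuinely breaks where you say it might: when $m=2$ and the two pairs sit at different $A$-states, you can never ``park both $A$-coordinates off $m-1$,'' since reading $a$ merely swaps which branch is at the final state and every non-$a$ letter re-inserts $0$ into the branch parked at $m-1$. This is not a corner case to defer; it covers a positive fraction of all pairs and the stated gadget $b^{n-q}$ applied to the branch at $m-1$ always yields a set containing $0$, hence a final state, so it fails to separate whenever $q$ lies in the set attached to the non-final branch.

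The good news is that your witnesses do work and the holes are repairable without changing them. For $(0,S)$ versus $(1,T)$ with $m=2$, $S\neq T$ and $0\in S\cap T$: if there is $q\in T\setminus S$ with $q\neq 0$, read $b^{n-q}$; the branch at $A$-state $0$ reaches $(0,S-q)$ with $0\notin S-q$ (non-final) while the branch at $A$-state $1$ re-inserts $0$ at every step and stays final. If instead $T\subsetneq S$, first read $a$ to swap the roles (this leaves both second coordinates unchanged since $0$ already belongs to them) and then apply the same word for some $q\in S\setminus T$. The same observation --- that the branch parked at $m-1$ remains final under every $\{b,c\}$-word while the other branch can be driven to a set avoiding $0$ --- also repairs your third case when one of the two $A$-coordinates is already $m-1$, where ``shrinking the common second coordinate'' is not available because the two second coordinates immediately diverge. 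With these repairs, and with the $m\ge 3$ situations handled by choosing $k\in\{0,\dots,m-1\}$ so that $a^k$ parks both $A$-coordinates off $m-1$ (possible since at most two values of $k$ are excluded), your proof closes.
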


Note that, if $n = 1$, due to Theorem 3 in~\cite{YuZhSa94}, for any
DFA $A$ of size $m \ge 1$, the state complexity of a DFA accepting
$L(A)L(B)$ ($L(A)L(B)^*$) is $m$.

In the rest of this section, we only consider cases $L_1L_2^*$ where
the DFA for $L_2$ contains at least one final state that is not the
initial state. Thus, the DFA for $L_2$ is of size at least 2.

When considering the size of the DFA for $L_1$, we notice that, when
the size of this DFA is 1, the state complexity of $L_1L_2^*$ is 1.
\begin{lemma}\label{lem:cat-star-m=1}
Let $A$ be a 1-state DFA and $B$ be a DFA of $n \ge 1$ states. Then,
the necessary and sufficient number of states for a DFA to accept
$L(A)L(B)^*$ is $1$.
\end{lemma}

Now, we focus on the cases when $m > 1$ and $n > 1$, and give an upper bound for the state complexity of $L_1 L_2^*$.

\begin{theorem}\label{thm:cat-star-upper}
Let $A = (Q_1, \Sigma, \delta_1, s_1, F_1)$ be a DFA such
that $|Q_1| = m > 1$ and $|F_1| = k_1$, and $B = (Q_2, \Sigma,
\delta_2, s_2, F_2)$ be a DFA such that $|Q_2| = n > 1$ and
$|F_2 - \{s_2\}| = k_2 \ge 1$. Then, there exists a DFA of at most
$m (2^{n-1} + 2^{n-k_2 -1}) - k_1 2^{n-k_2-1}$ states that accepts
$L(A) L(B)^*$.
\end{theorem}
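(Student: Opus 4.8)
\emph{Proof plan.} The plan is to build an NFA for $L(A)L(B)^*$, apply the subset construction, and then bound the number of \emph{reachable} states of the resulting DFA; since we only need an upper bound, no distinguishability argument is required.

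First I would construct a $\lambda$-NFA $N$ for $L(A)L(B)^*$ in the usual way: take the disjoint union of $A$ and $B$, add a $\lambda$-transition from every state of $F_1$ to $s_2$, add a $\lambda$-transition from every state of $F_2$ to $s_2$, and designate $F_1 \cup \{s_2\}$ as the set of final states (the states of $F_1$ are final because $\lambda \in L(B)^*$; this suffices since each state of $F_2$ has a $\lambda$-transition to $s_2$). A routine induction shows $L(N) = L(A)L(B)^*$. Now determinize $N$. Because $A$ is complete and deterministic and $N$ never leaves $Q_1$ on a non-$\lambda$ move, the $Q_1$-part of any reachable subset of states is the singleton $\{\delta_1(s_1,w)\}$; hence the reachable states of the determinization have the form $\langle p, S\rangle$ with $p \in Q_1$, $S \subseteq Q_2$. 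The start state is $\langle s_1, S_0\rangle$ with $S_0 = \{s_2\}$ if $s_1 \in F_1$ and $S_0 = \emptyset$ otherwise; the transition on $a$ sends $\langle p,S\rangle$ to $\langle \delta_1(p,a),\, S'\rangle$, where (writing $\delta_2(S,a)$ for $\bigcup_{q\in S}\{\delta_2(q,a)\}$) we set $S' = \delta_2(S,a)\cup\{s_2\}$ when $\delta_1(p,a)\in F_1$ or $\delta_2(S,a)\cap F_2 \neq \emptyset$, and $S' = \delta_2(S,a)$ otherwise; and $\langle p,S\rangle$ is final exactly when $s_2 \in S$. Call this DFA $D$.

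The key structural fact is the invariant: for every reachable state $\langle p,S\rangle$ of $D$, if $p \in F_1$ or $S \cap (F_2\setminus\{s_2\}) \neq \emptyset$, then $s_2 \in S$. Checking the start state is trivial, and the inductive step is immediate from the transition rule: if $\delta_1(p,a)\in F_1$ the rule puts $s_2$ into $S'$, and if some $f \in S'\cap(F_2\setminus\{s_2\})$ then $f \neq s_2$ forces $f \in \delta_2(S,a)$, so $\delta_2(S,a)\cap F_2 \neq \emptyset$ and again the rule puts $s_2$ into $S'$. The only real care needed is in setting up this $\lambda$-closure correctly (in particular handling the case $s_2 \in F_2$ and the fact that $s_2$ may leave $S$ along a transition); once that is done, the invariant and the count below are routine, and the genuine difficulty of the paper lies in the matching lower bound, Theorem~\ref{thm:cat-star-lower}.

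Finally I would count. By the invariant, the reachable states of $D$ all lie in the set $T$ of pairs $\langle p,S\rangle$ that either satisfy $s_2 \in S$, or satisfy $s_2 \notin S$ together with $p \notin F_1$ and $S \cap (F_2\setminus\{s_2\}) = \emptyset$ (the second alternative being the contrapositive of the invariant). Pairs of the first kind number $m\cdot 2^{n-1}$. For pairs of the second kind there are $m - k_1$ choices for $p$, and since $S$ must avoid $s_2$ and the $k_2$ states of $F_2\setminus\{s_2\}$, at most $2^{\,n-k_2-1}$ choices for $S$ (note $n-k_2-1 \ge 0$ since $F_2\setminus\{s_2\}\subseteq Q_2\setminus\{s_2\}$), giving $(m-k_1)2^{\,n-k_2-1}$ such pairs. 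Hence $D$ has at most $m\,2^{n-1} + (m-k_1)2^{\,n-k_2-1} = m(2^{n-1}+2^{\,n-k_2-1}) - k_1 2^{\,n-k_2-1}$ reachable states, and the restriction of $D$ to its reachable part is a DFA of this size accepting $L(A)L(B)^*$, which proves the theorem.
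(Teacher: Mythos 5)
Your overall strategy---an NFA with $\lambda$-moves, the subset construction, and a reachability invariant with no distinguishability argument---is sound and, once repaired, delivers the paper's bound by a cleaner route than the paper's own proof. But there is a genuine error in the NFA you start from: you make $s_2$ a final state. The state $s_2$ can be re-entered by ordinary $\delta_2$-transitions of $B$, not only by the $\lambda$-moves that witness completion of a block of $L(B)$, and then your automaton accepts words it should not. Concretely, take $\Sigma=\{a,b\}$, let $A$ accept $\{\lambda\}$ (two states, $s_1$ final, a sink $d$), and let $B$ have states $\{s_2,f\}$ with $F_2=\{f\}$, $\delta_2(s_2,a)=s_2$, $\delta_2(s_2,b)=f$, and $\delta_2(f,x)=f$ for both letters. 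Then $L(B)^*=\{\lambda\}\cup\{w : |w|_b\ge 1\}$, so $a\notin L(A)L(B)^*$; yet your NFA accepts $a$ via the $\lambda$-move from $s_1$ to $s_2$ followed by the $a$-loop at $s_2$, and correspondingly your DFA $D$ moves from the start state $\langle s_1,\{s_2\}\rangle$ to $\langle d,\{s_2\}\rangle$ on $a$ and accepts because $s_2\in\{s_2\}$. This is exactly the classical pitfall in the star construction that forces one to introduce a fresh initial-and-final state with no incoming transitions---the paper's $s_2'$.

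The repair is small: keep your transitions and start state but take $F_1\cup F_2$ (rather than $F_1\cup\{s_2\}$) as the final set of the NFA, so that $\langle p,S\rangle$ is final iff $p\in F_1$ or $S\cap F_2\neq\emptyset$; a check along the lines you sketched shows this NFA does accept $L(A)L(B)^*$ (an accepting run ending in $F_2$ splits at the $\lambda$-moves into an $L(A)$-prefix followed by blocks each driven by $\delta_2$ from $s_2$ into $F_2$). Your reachability invariant depends only on the transitions and the start state, so it is unaffected, and the count of reachable states is therefore exactly as you computed, giving $m(2^{n-1}+2^{n-k_2-1})-k_1 2^{n-k_2-1}$. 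With that one-line fix your argument is correct and is arguably tidier than the paper's: the paper imposes the same restriction on the $Q_2$-component (its set $P$) but pays for the auxiliary state $s_2'$ with a separate step merging $k_1 2^{n-k_2-1}$ equivalent pairs, whereas you would obtain the whole bound from reachability alone.
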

\begin{proof}
We denote $F_2 - \{s_2\}$ by $F_0$. Then, $|F_0| = k_2 \ge 1$.

We construct a DFA $C = \{Q, \Sigma, \delta, s, F\}$ for the language $L_1 L_2^*$, where $L_1$ and $L_2$ are the
languages accepted by DFAs $A$ and $B$, respectively.
Intuitively, $C$ is constructed by first constructing a DFA $B'$ for
accepting $L_2^*$, then catenating $A$ to this new DFA.
Note that, in the construction for $B'$, we need to add an additional initial and final state $s_2'$.
By careful examination, we can check that the states of $B'$ are state $s_2'$ and the elements
in $P - \{\emptyset\}$, where $P$ is defined in the following.
As the state set we choose
\[
    Q = \{r \cup p \mid r \in R \mbox{ and } p \in P\}, \mbox{ where }
\]
\begin{eqnarray*}
    R & = & \{ S \mid S = \{q_i\}, \mbox{ if $q_i \not\in F_1$}, S = \{q_i, s_2'\}, \mbox{ otherwise, where $q_i \in Q_1$} \}, \mbox{ and} \\
    P & = & \{S \mid S \subseteq (Q_2 - F_0)\} \cup \{ T \mid T \subseteq Q_2, s_2 \in T, \mbox{ and } T \cap F_0 \neq \emptyset\}.
\end{eqnarray*}

If $s_1 \not\in F_1$, the initial state $s$ is $s = \{s_1\} \cup
\{\emptyset\}$, otherwise, $s = \{s_1, s_2'\} \cup \{\emptyset\}$.

The set of final states $F$ is chosen to be $F = \{S \in Q \mid S \cap (F_2 \cup \{s_2'\}) \neq \emptyset\}$.

We denote a state in $Q$ as $\{q_i\} \cup G$, where $q_i \in Q_1$
and $G \subseteq Q_2 \cup \{s_2'\}$. Then, the transition relation
$\delta$ is defined as follows:
\[
    \delta(\{q_i\} \cup G, a) = D_1 \cup D_2 \cup D_3, \mbox{ for any $a \in \Sigma$}, \mbox{ where }
\]
\begin{description}
\item[$D_1$:] If $\delta_1(q_i, a) = q_i' \in F_1$, $D_1 = \{q_i', s_2'\}$, otherwise, $D_1 = \{q_i'\}$.
\item[$D_2$:] If $s_2' \in G$, then $D_2 = \{\delta_2(s_2,a)\}$, otherwise, $D_2 = \emptyset$.
\item[$D_3$:] If $G = \emptyset$, $D_3 = \emptyset$, otherwise,
\[
D_3 = \left\{
\begin{array}{l l}
  \delta_2(G,a) & \quad \text{if $\delta_2(G,a) \cap F_0 = \emptyset$,}\\
  \delta_2(G,a) \cup \{s_2\} & \quad \text{otherwise.}\\
\end{array} \right.
\]
\end{description}

We can verify that the DFA $C$ indeed accepts $L_1 L_2^*$.
It is clear that each state in $Q$ should consist of exactly one state in $Q_1$ and the states in one element of $P - \{\emptyset\}$.
Moreover, if a state of $Q$ contains a final state
of $A$, then this state also contains the state $s_2'$.

To get an upper bound for the state complexity of
catenation combined with star, we should count the number of states of $Q$.
However, as we will show in the following, some states in $Q$ are
equivalent.

Let us recall the construction for $B'$.
Note that, in that construction, states $s_2'$ and $s_2$ should reach the same state on any letter in $\Sigma$.
Also note that a state of $Q$ contains $s_2'$ only when it contains a final state of $A$.
Moreover, there exist pairs of states, denoted by $\{q_f, s_2', s_2\} \cup
T$ and $\{q_f,s_2'\} \cup T$, such that $q_f$ is a final state of $A$
and $T \subseteq Q_2 \setminus \{s_2\}$.
Then, we show that the two states in each of such pairs are equivalent as follows.
For a letter $a \in \Sigma$ and a word $w
\in \Sigma^*$,
\[
    \delta(\{q_f,s_2', s_2\} \cup T, aw) = \delta(\{q_f,s_2'\} \cup T, aw) = \delta(\delta(\{q_f,s_2'\} \cup T,a), w).
\]
Note that the equivalent states are only in the set $F_1 \times
\{s_2'\} \times \{S \mid S \subseteq (Q_2 - F_0)\}$, and we can furthermore partition this set into two sets as
\begin{eqnarray*}
& & F_1 \times \{s_2'\} \times \{s_2\} \times \{S' \mid S' \subseteq (Q_2 -  F_0 - \{s_2\})\} \cup\\
& & F_1 \times \{s_2'\}\times \{S' \mid S' \subseteq (Q_2 - F_0 -
\{s_2\})\}.
\end{eqnarray*}
It is easy to see that, for each state in the former set, there exists one and only one equivalent
state in the latter set, and vice versa.
Thus, the number of equivalent pairs is $k_1 2^{n-k_2-1}$.

Finally, we calculate the number of inequivalent states of $Q$.
Notice that there are $m$ elements in $R$, $2^{n-k_2}$ elements in
the first term of $P$, and $(2^{k_2} - 1)2^{n-k_2-1}$ elements in
the second term of $P$. Therefore, the size of $Q$ is $|Q| = m
(2^{n-1} + 2^{n-k_2 -1})$. Then, after removing one state from each
equivalent pair, we obtain the following upper bound
\[
    m (2^{n-1} + 2^{n-k_2 -1}) - k_1 2^{n-k_2-1}.
\]
\end{proof}

Next, we give examples to show that this upper bound can be reached.

\begin{theorem}\label{thm:cat-star-lower}
For any integers $m \ge 2$ and $n \ge 2$, there exists a DFA $A$ of
$m$ states and a DFA of $n$ states such that any DFA accepting
$L(A)L(B)^*$ needs at least $m \dfrac{3}{4} 2^{n} - 2^{n-2}$ states.
\end{theorem}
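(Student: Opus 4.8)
The plan is to choose witness automata that push the bound of Theorem~\ref{thm:cat-star-upper} to its maximum. That bound, $m(2^{n-1}+2^{n-k_2-1})-k_1 2^{n-k_2-1}$, is largest when $k_2=1$ and $k_1=1$, where it equals $m(2^{n-1}+2^{n-2})-2^{n-2}=m\frac34 2^n-2^{n-2}=(3m-1)2^{n-2}$, which is exactly the quantity in the statement. So I would search for an $m$-state DFA $A$ with a single final state distinct from its start state and an $n$-state DFA $B$ with a single final state distinct from its start state, over an alphabet of two or three letters, such that the DFA $C$ built in the proof of Theorem~\ref{thm:cat-star-upper} has no collapsible states beyond the $k_1 2^{n-k_2-1}=2^{n-2}$ equivalences already identified there.

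Concretely, I would take $A$ on $Q_1=\{0,1,\dots,m-1\}$ with start state $0$ and $F_1=\{m-1\}$, with one letter acting as the $m$-cycle $i\mapsto i+1 \bmod m$ and the remaining letters acting as the identity (or simple resets) on $Q_1$; and $B$ a Maslov-style automaton on $Q_2=\{0,1,\dots,n-1\}$ with start state $0$ and $F_2=\{n-1\}$, with one letter realising an $(n-1)$-cycle on $\{1,\dots,n-1\}$ that fixes $0$, a second letter that pushes $s_2$ into the working subset, and (for $n\ge 3$) a third letter used purely to distinguish states. The transition tables would be tuned so that, after the construction of Theorem~\ref{thm:cat-star-upper}, the reachable states are precisely the $m(2^{n-1}+2^{n-2})$ sets $\{q_i\}\cup G$ with $s_2'\in G$ exactly when $q_i=m-1$ and with $G\cap Q_2$ ranging over all subsets respecting the constraint ``$s_2\in T$ whenever $T\cap F_0\neq\emptyset$'' from the construction, and so that the only coincidences among them are the $2^{n-2}$ merges of pairs $\{q_f,s_2',s_2\}\cup T$ with $\{q_f,s_2'\}\cup T$.

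The proof then splits into the two standard parts. For reachability I would induct on $|G\cap Q_2|$: from the initial state one first reaches every state with empty $Q_2$-part by iterating the $A$-cycle letter, then, once the $A$-component has passed through $m-1$ so that $s_2'$ (hence $s_2$) is available, one uses the letter feeding $s_2$ into the working subset together with the $(n-1)$-cycle letter to assemble any admissible subset of $Q_2$, checking at each step that the intermediate sets are among those present in $Q$. For distinguishability I would, given two distinct reachable states, locate either a discrepancy in their $Q_1$-components or an element of $Q_2\cup\{s_2'\}$ in the symmetric difference of their $Q_2$-components, and then exhibit a word — formed from the cycle letters to rotate that discrepancy into a final position and from the distinguishing letter to erase everything else — accepted from one state but not the other; the single case where the difference is only the presence of $s_2$ alongside $q_f$ and $s_2'$ is exactly the merged pair and is therefore excluded. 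Counting the surviving states gives $(3m-1)2^{n-2}=m\frac34 2^n-2^{n-2}$.

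The main obstacle, as is typical for lower bounds on combined operations, is the bookkeeping in the distinguishability step: one must ensure that the star loop in $B'$ (which can re-enter $s_2$ from several places) does not accidentally identify two states, and that the extra letter can always isolate one chosen element of a subset without side effects. Since the matching upper bound is already in hand, it suffices to exhibit this many reachable, pairwise-inequivalent states; I would carry out the argument in full for the base case $n=2$, where the subset structure is small enough to verify by hand, and indicate how the $(n-1)$-cycle letter propagates it to $n\ge 3$.
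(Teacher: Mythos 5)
Your overall strategy is exactly the paper's: observe that the upper bound of Theorem~\ref{thm:cat-star-upper} is maximized at $k_1=k_2=1$, where it equals $(3m-1)2^{n-2}=m\frac{3}{4}2^{n}-2^{n-2}$, then exhibit witnesses over a three-letter alphabet, each with a single final state distinct from the initial state, and prove reachability plus pairwise inequivalence of all states of the construction except the $2^{n-2}$ merged pairs. That part of the analysis is correct and matches the paper.

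The genuine gap is that the proposal never actually produces the witnesses. Phrases such as ``I would search for an $m$-state DFA\ldots'' and ``the transition tables would be tuned so that\ldots'' defer precisely the content of the theorem: for a tightness result, the proof \emph{is} the explicit pair $(A,B)$ together with the verification that every state of $C$ is reachable and that no unintended equivalences arise. Such verifications routinely fail for natural-looking candidates. For instance, with your choice of letters acting as the identity on $Q_1$ outside the single cycle letter, one must check that states of the form $\{q_i\}\cup\{s_2\}$ with $q_i$ non-final are reachable at all: the marker $s_2'$ is injected only while the first component sits on the final state of $A$, and the only letter that moves the first component away also acts on the $Q_2$-part. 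The paper's witness for $n=2$ resolves exactly this by giving $A$ two further letters ($b$ and $c$) that act differently on $q_{m-2}$ and $q_{m-1}$, so that one can leave the final state of $A$ while keeping, or not keeping, the trace of $s_2$ in the subset. Finally, your closing claim that the fully verified case $n=2$ ``propagates'' to $n\ge 3$ via the $(n-1)$-cycle letter is not an argument: there is no induction from $n=2$ to general $n$, and each $n$ requires its own reachability and distinguishability check on a concretely specified family (the paper itself carries out only $n=2$ and omits $n\ge 3$). As it stands, the proposal is a correct plan, with the right target value and the right two-part structure, but not a proof.
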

\begin{proof}
We first give an example of two DFAs $A$ and $B$ of sizes $m \ge 2$
and $n = 2$, respectively, and we show that the number of states of
a DFA accepting $L(A) L(B)^*$ reaches the upper bound given in
Theorem~\ref{thm:cat-star-upper}. We use a three-letter alphabet
$\Sigma = \{a,b,c\}$. We omit the cases when $n>2$, due to the page
limit.

Define $A = (Q_1, \Sigma, \delta_1, q_0, \{q_{m-1}\})$,
where $Q_1 = \{q_0, q_1, \ldots, q_{m-1}\}$, and the transitions are given as:
\begin{itemize}
\item $\delta_1(q_i, a) = q_{i+1}, i \in \{0, \ldots, m-2\}$, $\delta_1(q_{m-1},a) = q_0$,
\item $\delta_1(q_i, b) = q_{i+1}, i \in \{0, \ldots, m-3\}$, $\delta_1(q_{m-2},b) = q_0$, $\delta_1(q_{m-1},b) = q_{m-2}$,
\item $\delta_1(q_i, c) = q_{i+1}, i \in \{0, \ldots, m-3\}$, $\delta_1(q_{m-2},c) = q_0$, $\delta_1(q_{m-1},c) = q_{m-1}$.
\end{itemize}

Define $B = (Q_2, \Sigma, \delta_2, 0, \{1\})$, where $Q_2 =
\{0,1\}$, and the transitions are given as:
\begin{eqnarray*}
\delta_2(0, a) = 1, & \delta_2(0, b) = 0, & \delta_2(0, c) = 0, \\
\delta_2(1, a) = 0, & \delta_2(1, b) = 1, & \delta_2(1, c) = 0.
\end{eqnarray*}

Following the construction described in the proof of
Theorem~\ref{thm:cat-star-upper}, we construct a DFA $C =
(Q_3, \Sigma, \delta_3, s_3, F_3)$ that accepts $L(A)L(B)^*$. Note
that set $P$ only contains three elements $P = \{\emptyset, \{0\},
\{0,1\}\}$. To prove that $C$ reaches the upper bound, it is
sufficient to show that 1) all the states in $Q_3$ are reachable
from $s_3$, 2) after merging the equivalent states $\{q_{m-1}, 0'\}$ and
$\{q_{m-1}, 0', 0\}$, the remaining states are pairwise inequivalent.

We first consider the reachability of all the states. It is clear
that state $\{q_i\} \cup \{\emptyset\}$, for $i \in \{1, \ldots, m-2\}$, and state
$\{q_{m-1}, 0'\} \cup \{\emptyset\}$ are reachable from $s_3$ by reading the strings
$a^i$ and $a^{m-1}$, respectively. Then, on letters $b$ and $c$, we can reach
states $\{q_{m-2}, 0\}$ and $\{q_{m-1}, 0', 0\}$, respectively, from state
$\{q_{m-1},0'\}$. Moreover, state $\{q_i, 0\}$, $i \in
\{0, \ldots, m-3\}$, can be reached from state $\{q_{m-2}, 0\}$ by
reading the string $b^{i+1}$. Lastly, state $\{q_i,0,1\}$, $i \in
\{0, \ldots, m-2\}$, and state $\{q_{m-1}, 0',0,1\}$, are reachable
from $\{q_{m-1},0'\}$ on inputs $a^{i+1}$ and $a^{m}$, respectively.

Since states $\{q_{m-1}, 0'\}$ and $\{q_{m-1}, 0', 0\}$ are
equivalent, we remove state $\{q_{m-1}, 0', 0\}$ from $Q_3$, and
show that the rest of the states are pairwise inequivalent. Let
$\{q_i\} \cup G$ and $\{q_j\} \cup H$ be two different states in
$Q_3$ with $0 \le i \le j \le m-1$. There are three cases:

1. $i < j$. Then the string $a^{m-1-i}c$ is accepted by DFA $C$
starting from state $\{q_i\} \cup G$, but it is not accepted
starting from state $\{q_j\} \cup H$. Note that, after reading
$a^{m-1-i}c$, state $\{q_i\} \cup G$ reaches a state that contains
states $q_{m-1}$ and $0'$. In contrast, the state reached by
$\{q_i\} \cup H$ on the same input does not contain these states.
Moreover, the resulting states cannot contain state $1$, since on
letter $c$, $C$ remains in state $0$ from state $0$ and goes to state
$0$ from state $1$.

2. $i = j \neq m-1$. Since $P = \{\emptyset, \{0\}, \{0,1\}\}$ consists
of only three elements, we consider them individually. It is obvious
that, state $\{q_i, 0, 1\}$ is not equivalent to either $\{q_i\}$ or
$\{q_i,0\}$, since it is a final state but the latter two are not.
States $\{q_i\}$ and $\{q_i,0\}$ are inequivalent, since on the
string $ab$ we can reach a final state from state $\{q_i,0\}$ but not from state
$\{q_i\}$.

3. $i = j = m-1$. There are only two states $\{q_{m-1}, 0'\}$ and
$\{q_{m-1}, 0',0,1\}$. They are inequivalent, because after reading a letter
$b$, state $\{q_{m-1}, 0',0,1\}$ leads to a final state of $C$ but $\{q_{m-1},
0'\}$ does not.

Due to 1) and 2), DFA $C$ has at least $3m + 2$ pairwise inequivalent reachable states, which reaches the upper bound in Theorem~\ref{thm:cat-star-upper}.
\end{proof}

\section{Catenation combined with reversal}\label{sec:cat-rev}
In this section, we first show that the state complexity of catenation combined with an antimorphic involution $\theta$ ($L_1\theta(L_2)$) is equal to that of catenation combined with reversal.
That is, we show, for two regular languages $L_1$ and $L_2$, that $sc (L_1 \theta(L_2)) = sc(L_1 L_2^R)$ (Corollary~\ref{coro:equal-complexity}).
Then, we obtain the state complexity of $L_1L_2^R$ by proving that its upper bound (Theorem~\ref{L_1 L_2^R upper bound}) coincides with its lower bound (Theorem~\ref{L_1 L_2^R lower bound}, Theorem~\ref{L_1 L_2^R lower bound m=1 n>=3}, and Lemma~\ref{L_1 L_2^R lower bound m=1 n=2}).

We note that an antimorphic involution $\theta$ can be simulated by the composition of two simpler operations: reversal and a mapping $\phi$, which is defined as $\phi(a) = \theta(a)$ for any letter $a \in \Sigma$, and $\phi(uv) = \phi(u)\phi(v)$ where $u,v \in \Sigma^+$.
Thus, for a language $L$, we have $\theta(L) = \phi(L^R)$ and $\theta(L) = (\phi(L))^R$.
It is clear that $\phi$ is a homomorphism.
Thus, the language resulting from applying such a mapping to a regular language remains to be regular.
Moreover, we can obtain a relationship between the sizes of the two DFAs that accept $L$ and $\phi(L)$, respectively.
\begin{lemma}\label{lem:sc-theta}
Let $L \subseteq \Sigma^*$ be a language that is accepted by a minimal DFA of size $n$, $n \ge 1$. Then, the necessary and sufficient number of states of a DFA to accept $\phi(L)$ is $n$.
\end{lemma}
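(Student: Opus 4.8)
The plan is to show both directions by direct DFA constructions that neither add nor remove states. Let $M = (Q, \Sigma, \delta, s, F)$ be a minimal complete DFA of size $n$ accepting $L$. Since $\phi$ is a letter-to-letter homomorphism with $\phi(a) = \theta(a)$, define $M_\phi = (Q, \Sigma, \delta', s, F)$ on the same state set by setting $\delta'(q, a) = \delta(q, \theta(a))$ for every $q \in Q$ and $a \in \Sigma$. A straightforward induction on word length shows $\delta'(s, w) = \delta(s, \phi^{-1}(w))$; since $\theta$ is an involution, $\phi^{-1} = \phi$, so $\delta'(s,w) \in F$ iff $\delta(s, \phi(w)) \in F$ iff $\phi(w) \in L$ iff $w \in \phi(L)$ (again using $\phi^2 = I$ to rewrite $w = \phi(u)$). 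Hence $M_\phi$ accepts $\phi(L)$ with $n$ states, giving the upper bound: $n$ states suffice.

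For the lower bound I would argue that $M_\phi$ is in fact already minimal, so that no smaller DFA for $\phi(L)$ exists. This follows from the symmetric nature of the construction: applying the same relabelling operation to $M_\phi$ recovers $M$, because $\theta$ is an involution. Concretely, if $\phi(L)$ were accepted by a DFA $M'$ with fewer than $n$ states, then relabelling $M'$ by the same rule (replacing $\delta$ by $\delta''(q,a) = \delta_{M'}(q,\theta(a))$) would yield a DFA for $\phi(\phi(L)) = L$ with fewer than $n$ states, contradicting the minimality of $M$. Equivalently, one can note that reachability and the state-distinguishability relation are both preserved under the bijective relabelling of the alphabet induced by $\theta$, so $M_\phi$ inherits the fact that all its states are reachable and pairwise inequivalent from the corresponding properties of $M$.

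The only subtlety — and the one step that needs care rather than being purely routine — is making sure the involution property of $\theta$ is used correctly at each point: we need $\theta$ (hence $\phi$) to be its own inverse both to define the reverse translation and to conclude $w \in \phi(L) \iff \phi(w) \in L$. Everything else is bookkeeping: the state set, start state, and final set are literally unchanged, so the count is exactly $n$ in both directions, which establishes that $n$ is the necessary and sufficient number of states for a DFA accepting $\phi(L)$.
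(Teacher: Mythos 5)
Your proof is correct and complete: the relabelling $\delta'(q,a)=\delta(q,\theta(a))$ together with the involution property of $\theta$ (which makes $\phi$ its own inverse, so the same relabelling applied twice recovers the original automaton) gives both the upper bound and, by the contradiction argument, the matching lower bound. The paper omits its own proof of this lemma for space reasons, but yours is the standard and evidently intended argument, so there is nothing to add.
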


In order to show that the state complexity of $L_1\theta(L_2)$ is
equal to that of $L_1L_2^R$, we first show that the state complexity
of catenation combined with $\phi$ is equal to that of catenation,
i.e., for two regular languages $L_1$ and $L_2$, $sc(L_1 \phi(L_2))
= sc(L_1 L_2)$. Due to the above lemma, if $L_2$ is accepted by a
DFA of size $n$, $\phi(L_2)$ is accepted by another DFA of size $n$
as well. Thus, the upper bound for the number of states of any DFA
that accepts $L_1 \phi(L_2)$ is clearly less than or equal to $m
2^{n} - 2^{n-1}$.
The next lemma shows that this upper bound can be reached by some languages.

\begin{lemma}\label{lem:sc-cat-theta}
For integers $m \ge 1$ and $n \ge 2$, there exist languages $L_1$ and $L_2$ accepted by two DFAs of sizes $m$ and $n$, respectively, such that any DFA accepting $L_1 \phi(L_2)$ needs at least $m 2^n - 2^{n-1}$ states.
\end{lemma}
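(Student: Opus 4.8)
The plan is to reduce the claim to the already-established tightness of the catenation state complexity bound, using the fact that $\phi$ is a bijective renaming. Recall from~\cite{YuZhSa94} that the bound $m2^{n}-2^{n-1}$ for the catenation of an $m$-state DFA language and an $n$-state DFA language is reachable: there is a witness pair consisting of a DFA $A$ with $m$ states and a minimal DFA $B$ with $n$ states such that every DFA accepting $L(A)L(B)$ has at least $m2^{n}-2^{n-1}$ states (this is exactly the regime $m\ge 1$, $n\ge 2$ of the present lemma). I would take these two DFAs as the starting point.

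The key observation is that $\phi$, as defined above, is not merely a homomorphism but an involutive bijection on $\Sigma^{*}$: since $\theta^{2}=I$ on $\Sigma$, we have $\phi(\phi(a))=\theta(\theta(a))=a$ for every letter $a$, and because $\phi$ is a homomorphism this extends to $\phi(\phi(w))=w$ for every word $w$, hence $\phi(\phi(L))=L$ for every language $L$. I would then set $L_{1}=L(A)$ and $L_{2}=\phi(L(B))$. By Lemma~\ref{lem:sc-theta}, $L_{2}$ is accepted by a DFA of $n$ states (and $L_{1}$ by a DFA of $m$ states by construction), so the pair $(L_{1},L_{2})$ is admissible for the statement.

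Applying $\phi$ once more gives $\phi(L_{2})=\phi(\phi(L(B)))=L(B)$, so $L_{1}\phi(L_{2})=L(A)L(B)$. Since every DFA accepting $L(A)L(B)$ needs at least $m2^{n}-2^{n-1}$ states by the choice of $A$ and $B$, the same lower bound holds for any DFA accepting $L_{1}\phi(L_{2})$, which is exactly what is claimed.

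There is essentially no hard computational step here; the only point requiring a little care is the alphabet: one must ensure that the catenation witness of~\cite{YuZhSa94} can be realized over the same alphabet $\Sigma$ on which $\theta$ (hence $\phi$) acts. This is unproblematic, since that witness uses a fixed-size alphabet, $\Sigma$ may be assumed at least that large, and the bound $m2^{n}-2^{n-1}$ presupposes $|\Sigma|\ge 2$ anyway. If a self-contained argument were preferred, one could instead exhibit concrete DFAs and run the usual reachability-and-distinguishability analysis as in the proof of Theorem~\ref{thm:cat-star-lower}; but passing the catenation witness through the renaming $\phi$ avoids repeating that work.
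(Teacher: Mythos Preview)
Your argument is correct and is the natural one: since $\phi$ is an involutive letter-to-letter bijection, pulling the known catenation witness $(A,B)$ from~\cite{YuZhSa94} back through $\phi$ on the second component immediately yields a witness for $L_{1}\phi(L_{2})$, with Lemma~\ref{lem:sc-theta} guaranteeing the state count of $L_{2}=\phi(L(B))$ is still $n$. The paper omits its proof of this lemma, but given that Lemma~\ref{lem:sc-theta} and Corollary~\ref{coro:equal-complexity} are set up precisely for this reduction, your approach is almost certainly what the authors had in mind; there is no meaningful divergence to discuss.

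The only place to be slightly more careful is the alphabet remark. You note correctly that the witness of~\cite{YuZhSa94} lives over a fixed finite alphabet and that one needs $|\Sigma|$ at least that large; strictly speaking, the lemma is asserted for the particular $\Sigma$ on which $\theta$ (hence $\phi$) is defined, so ``$\Sigma$ may be assumed at least that large'' is an assumption on $\theta$, not something you get to choose after the fact. In the paper's setting (e.g.\ the three-letter example in the preliminaries) this is harmless, and for $|\Sigma|=1$ the bound $m2^{n}-2^{n-1}$ fails already for plain catenation, so the restriction is unavoidable; but it would be cleaner to state this hypothesis explicitly rather than fold it into a parenthetical.
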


As a consequence, we obtain that the state complexity of catenation combined with $\phi$ is equal to that of catenation.

\begin{corollary}\label{coro:equal-complexity}
For two regular languages $L_1$ and $L_2$, $sc (L_1 \phi (L_2) ) = sc (L_1 L_2)$.
\end{corollary}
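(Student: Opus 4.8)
The plan is to establish the corollary by combining the two inequalities in both directions, using the results already available in the excerpt. For the upper bound $sc(L_1\phi(L_2)) \le sc(L_1 L_2)$: given minimal DFAs $A$ of size $m$ for $L_1$ and $B$ of size $n$ for $L_2$, Lemma~\ref{lem:sc-theta} tells us that $\phi(L_2)$ is accepted by a DFA of size $n$ as well (the homomorphic image under the letter-renaming $\phi$ needs exactly the same number of states). Then the classical catenation bound of~\cite{YuZhSa94} says any catenation of an $m$-state DFA language and an $n$-state DFA language is accepted by a DFA with at most $m2^n - 2^{n-1}$ states, and this same quantity is by definition the generic worst-case value $sc(L_1 L_2)$ over all such pairs. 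More carefully, I would argue at the level of the state-complexity \emph{function}: writing $f$ for the catenation state-complexity function, we have $sc(L_1\phi(L_2)) \le f(m,n) = sc(L_1 L_2)$ because $\phi(L_2)$ has the same state complexity $n$ as $L_2$, so $\phi(\cdot)$ induces a bijection on the relevant families of operand languages and cannot increase the catenation complexity.

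For the matching lower bound, I would invoke Lemma~\ref{lem:sc-cat-theta}, which exhibits, for every $m\ge 1$ and $n\ge 2$, witness languages $L_1, L_2$ accepted by DFAs of sizes $m$ and $n$ such that $sc(L_1\phi(L_2)) \ge m2^n - 2^{n-1}$. Since the right-hand side equals the known exact state complexity of catenation of an $m$-state and an $n$-state DFA language, this shows $sc(L_1\phi(L_2))$ attains the catenation worst case, hence $sc(L_1\phi(L_2)) \ge sc(L_1 L_2)$ as functions of $(m,n)$. The small residual case $n=1$ should be handled separately and is immediate: if $L_2$ is accepted by a $1$-state DFA then $L_2$ is either $\emptyset$ or $\Sigma^*$, $\phi(L_2)$ is the same, and in either case $sc(L_1\phi(L_2)) = sc(L_1 L_2)$ (both equal $1$ or $sc(L_1)$ respectively), matching the degenerate catenation complexity.

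Putting the two directions together gives $sc(L_1\phi(L_2)) = sc(L_1 L_2)$ for all regular $L_1, L_2$, which is the statement of Corollary~\ref{coro:equal-complexity}. I would phrase the final write-up in two short paragraphs: one citing Lemma~\ref{lem:sc-theta} plus the catenation upper bound of~\cite{YuZhSa94} for "$\le$", and one citing Lemma~\ref{lem:sc-cat-theta} together with the exactness of the catenation bound for "$\ge$".

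The only real subtlety — and the step most worth stating carefully — is the bookkeeping about what "$sc$ of an operation" means: the corollary is an equality of worst-case complexity \emph{functions}, not an equality for each fixed pair $(L_1, L_2)$, so one must be careful that the upper bound argument ranges over \emph{all} pairs of operands of the given sizes while the lower bound only needs one witness pair. Since both Lemma~\ref{lem:sc-theta} and Lemma~\ref{lem:sc-cat-theta} are assumed, there is no genuine obstacle here; the proof is essentially a two-line synthesis, and the main care is simply in matching the quantifiers correctly and disposing of the $n=1$ boundary case.
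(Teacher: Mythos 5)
Your proposal is correct and follows essentially the same route as the paper: the upper bound comes from Lemma~\ref{lem:sc-theta} together with the catenation bound $m2^n-2^{n-1}$ of~\cite{YuZhSa94}, and the matching lower bound is exactly Lemma~\ref{lem:sc-cat-theta}. Your explicit handling of the $n=1$ boundary case and of the quantifier bookkeeping is slightly more careful than the paper, which treats the corollary as an immediate consequence, but the substance is identical.
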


Then, we can easily see that the state complexity of catenation combined with $\theta$ is equal to that of catenation combined with reversal as follows.
\[
    sc( L_1 \theta(L_2)) = sc(L_1 \phi (L_2^R)) = sc (L_1 L_2^R).
\]

In the following, we study the state complexity of $L_1 L_2^R$ for regular languages $L_1$ and $L_2$.
We will first look into an upper bound of this state complexity.

\begin{theorem}
\label{L_1 L_2^R upper bound} For two integers $m,n \ge 1$, let
$L_1$ and $L_2$ be two regular languages accepted by an $m$-state
DFA with $k_1$ final states and an $n$-state DFA with $k_2$ final
states, respectively. Then there exists a DFA of at most $m2^n-k_1
2^{n-k_2}(2^{k_2}-1) -m+1$ states that accepts $L_1 L_2^R$.
\end{theorem}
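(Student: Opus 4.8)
The plan is to construct a DFA for $L_1 L_2^R$ directly from an $m$-state DFA $A = (Q_1, \Sigma, \delta_1, s_1, F_1)$ for $L_1$ and an $n$-state DFA $B = (Q_2, \Sigma, \delta_2, s_2, F_2)$ for $L_2$, and then carefully count how many of the naively-produced states are reachable and pairwise inequivalent, with the goal of landing on the bound $m2^n - k_1 2^{n-k_2}(2^{k_2}-1) - m + 1$. First I would recall that $L_2^R = L(B^R)$, where $B^R$ is the NFA obtained by reversing all transitions of $B$, making $F_2$ the set of start states and $s_2$ the unique final state. So $L_2^R$ is recognized by an NFA with $n$ states, start-set $F_2$, and one accepting state. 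Now $L_1 L_2^R$ is recognized by the NFA that runs $A$, and nondeterministically, whenever $A$ is in a final state of $F_1$, spawns a copy of $B^R$ (i.e.\ adds the whole set $F_2$ to the active $B^R$-states). Subset-constructing this NFA gives a DFA whose states are pairs $(q, S)$ with $q \in Q_1$ and $S \subseteq Q_2$; the transition on $a$ sends $(q,S)$ to $(\delta_1(q,a),\, \delta_2^R(S,a) \cup T)$, where $T = F_2$ if $\delta_1(q,a) \in F_1$ and $T = \emptyset$ otherwise, and $(q,S)$ is final iff $s_2 \in S$. This gives the trivial upper bound $m 2^n$.

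The work is then to subtract off the states that are either unreachable or redundant, so as to reach $m 2^n - k_1 2^{n-k_2}(2^{k_2}-1) - m + 1$. I read the correction term as $m - 1$ plus $k_1 \cdot \bigl(2^n - 2^{n-k_2}\bigr) = k_1 \cdot 2^{n-k_2}(2^{k_2}-1)$. The $m-1$ part should come from the observation that $(q,\emptyset)$ is reachable for exactly one choice of $q$ (namely along words that never pass through $F_1$, which — if such words exist — all keep the $B^R$-component empty and all of $A$'s reachable-with-empty-component states collapse appropriately, or more simply: once a copy of $B^R$ has been spawned it can never be emptied back out on the relevant inputs, so among the $m$ pairs $(q,\emptyset)$ only the ones genuinely reachable survive, costing $m-1$); I would make this precise by arguing that the states $(q,\emptyset)$ for $q$ ranging over a set of size $m$ are mutually equivalent or all-but-one unreachable, whichever the construction actually forces. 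The $k_1\,2^{n-k_2}(2^{k_2}-1)$ part should come from the constraint that whenever $q \in F_1$, the $B^R$-component $S$ must contain all of $F_2$ (because entering $q \in F_1$ always re-injects the full set $F_2$, and no transition can remove a state of $F_2$ faster than the others once it is forced in — here I would use that $F_2$ is re-added \emph{every} time we sit at a final state of $A$): so for each of the $k_1$ final states of $A$, instead of $2^n$ possible companion sets only $2^{n-k_2}$ are legal, killing $k_1(2^n - 2^{n-k_2})$ states.

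Concretely the key steps, in order, are: (1) build the NFA for $L_1 L_2^R$ via the reversal NFA $B^R$ and the "spawn $F_2$ on visiting $F_1$" gadget; (2) subset-construct and describe the reachable state set $Q \subseteq Q_1 \times 2^{Q_2}$, establishing the two structural invariants — that $q \in F_1$ forces $F_2 \subseteq S$, and that the empty $B^R$-component occurs for essentially one $A$-state; (3) count: $|Q| \le (m - k_1) 2^n + k_1 2^{n-k_2}$ after invariant one, then shave another $m-1$ for invariant two, and check this simplifies to $m2^n - k_1 2^{n-k_2}(2^{k_2}-1) - m + 1$; (4) conclude that a DFA of this size accepts $L_1 L_2^R$. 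The main obstacle I expect is step (2): pinning down exactly which pairs $(q,S)$ are reachable and proving the "only one state with empty second component" claim cleanly — the reversal makes the dynamics of the $S$-component run "backwards" relative to $B$, so one must be careful about whether $F_2 \subseteq S$ is truly an invariant under $\delta_2^R$ (it need not be in general, so the argument must lean on the fact that $F_1$-visits keep re-supplying $F_2$, and that any word witnessing inequivalence can be chosen to end while still "inside" an $L_2^R$-suffix). If $F_2 \subseteq S$ fails to be a genuine invariant, the fallback is to argue equivalence rather than unreachability — i.e.\ that any two reachable pairs $(q, S)$ and $(q, S')$ with $q \in F_1$ and $S \triangle S' \subseteq F_2$ are equivalent in the resulting DFA — which yields the same count after minimization.
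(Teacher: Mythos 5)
Your construction is the same as the paper's: reverse $B$ into an NFA with start set $F_2$ and unique accepting state $s_2$, run it in product with $A$, re-injecting $F_2$ into the second component on every transition that enters a final state of $A$, and subset-construct. Your accounting of the term $k_1 2^{n-k_2}(2^{k_2}-1)$ is also exactly the paper's and is correct as you state it: a reachable pair $(q,S)$ with $q\in F_1$ must satisfy $F_2\subseteq S$ because $F_2$ is re-supplied on the very transition that enters $q$, so the worry about whether $F_2\subseteq S$ is preserved by $\delta_2^R$ is moot.

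The gap is in the $-m+1$ term, which you attribute to the pairs $(q,\emptyset)$, claiming they are ``mutually equivalent or all-but-one unreachable.'' Neither holds: $(q,\emptyset)$ is reachable for every $q$ that $A$ reaches along a path avoiding $F_1$, and such states are in general pairwise inequivalent --- indeed the paper's own lower-bound witness (Theorem~\ref{L_1 L_2^R lower bound}) has $m-1$ reachable, pairwise distinguishable states of the form $\langle i,\emptyset\rangle$, as its case 2 of part (II) explicitly notes. The saving actually comes from the opposite extreme: the $m$ pairs $(q,Q_2)$ with the \emph{full} second component. Since the DFA for $L_2$ is complete, $\delta_2^R(Q_2,a)=\{p \mid \delta_2(p,a)\in Q_2\}=Q_2$ for every letter $a$, so $Q_2$ is a sink of the subset automaton for $L_2^R$ (re-adding $F_2$ changes nothing), and it is accepting because $s_2\in Q_2$. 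Hence every pair $(q,Q_2)$ accepts $\Sigma^*$, all $m$ of them are equivalent, and merging them saves exactly $m-1$ states. These $m$ pairs are all still present after your first reduction (for $q\in F_1$ the set $Q_2$ does contain $F_2$), so the subtraction is legitimate and your arithmetic in step (3) goes through unchanged. With this one substitution your argument coincides with the paper's proof.
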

\begin{proof}
Let $M=(Q_M,\Sigma , \delta_M , s_M, F_M)$ be a DFA of $m$
states, $k_1$ final states and $L_1=L(M)$. Let $N=(Q_N,\Sigma ,
\delta_N , s_N, F_N)$ be another DFA of $n$ states, $k_2$
final states and $L_2=L(N)$. Let $N'=(Q_N,\Sigma , \delta_{N'} ,
F_N, \{s_N\})$ be an NFA with $k_2$ initial states.
$\delta_{N'}(p,a)=q$ if $\delta_N(q,a)=p$ where $a\in \Sigma$ and
$p,q\in Q_N$. Clearly, $$L(N')=L(N)^R=L_2^R.$$

After performing subset construction on $N'$, we can get an
equivalent, $2^n$-state DFA $A=(Q_A,\Sigma , \delta_A , s_A, F_A)$
such that $L(A)=L_2^R$. Please note that $A$ may not be minimal and
since $A$ has $2^n$ states, one of its final state must be $Q_N$.
Now we construct a DFA $B=(Q_B,\Sigma , \delta_B , s_B, F_B)$
accepting the language $L_1 L_2^R$, where
\begin{eqnarray*}
Q_B & = & \{\langle i,j \rangle \mid i\in Q_M\mbox{, } j\in Q_A\},\\
s_B & = & \langle s_M,\emptyset \rangle, \mbox{ if } s_M \not\in F_M;\\
    & = & \langle s_M, F_N \rangle, \mbox{ otherwise}, \\
F_B & = & \{\langle i,j \rangle\in Q_B \mid j\in F_A\},\\
\delta_B(\langle i,j \rangle, a) & = & \langle i',j' \rangle \mbox{,
if } \delta_M(i,a)=i'\mbox{, }\delta_A(j,a)=j'\mbox{, }a\in
\Sigma \mbox{, } i'\notin F_M; \\
& = &  \langle i',j'\cup F_N \rangle \mbox{, if }
\delta_M(i,a)=i'\mbox{, }\delta_A(j,a)=j'\mbox{, }a\in \Sigma
\mbox{, } i'\in F_M.
\end{eqnarray*}
It is easy to see that $\delta_B(\langle i,Q_N \rangle, a) \in F_B$
for any $i\in Q_M$ and $a\in \Sigma$. This means all the states
(two-tuples) ending with $Q_N$ are equivalent. There are $m$ such
states in total.

On the other hand, since NFA $N'$ has $k_2$ initial states, the
states in $B$ starting with $i\in F_M$ must end with $j$ such that
$F_N \subseteq j$. There are in total $k_1 2^{n-k_2}(2^{k_2}-1)$ states
which don't meet this.

Thus, the number of states of the minimal DFA accepting $L_1 L_2^R$
is no more than
$$m2^n-k_1 2^{n-k_2}(2^{k_2}-1) -m+1.$$
\end{proof}

This result gives an upper bound for the state complexity of $L_1
L_2^R$.
Next we show that this bound is reachable.

\begin{theorem}
\label{L_1 L_2^R lower bound}
Given two integers $m\geq 2$, $n\geq 2$, there exists a DFA $M$ of
$m$ states and a DFA $N$ of $n$ states such that any DFA accepting
$L(M)L(N)^R$ needs at least $m2^n-2^{n-1}-m+1$ states.
\end{theorem}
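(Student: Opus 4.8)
The plan is to produce witness DFAs that make the upper bound of Theorem~\ref{L_1 L_2^R upper bound} tight in the sub-case $k_1=k_2=1$. When $L_1$ is accepted by an $m$-state DFA with a single final state and $L_2$ by an $n$-state DFA with a single final state, that bound specializes to $m2^n - 1\cdot 2^{n-1}(2^{1}-1) - m + 1 = m2^n-2^{n-1}-m+1$, which is exactly the quantity in the statement. So I would fix the alphabet $\Sigma=\{a,b,c\}$ and take $M=(\{0,\dots,m-1\},\Sigma,\delta_M,0,\{m-1\})$ and $N=(\{0,\dots,n-1\},\Sigma,\delta_N,0,\{n-1\})$, where the letter $a$ acts on $M$ as the cyclic permutation $i\mapsto i+1\bmod m$ and on $N$ as the cyclic permutation $j\mapsto j+1\bmod n$, while $b$ and $c$ are chosen so that, in the reverse NFA $N'$ of the upper-bound proof, one of them enlarges a reachable subset one element at a time and the other relocates the distinguished state $n-1$; the precise transitions have to be tuned so that in the product construction the $M$-coordinate and the $N$-coordinate can be steered essentially independently.

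I would then run the construction of the DFA $B$ from the proof of Theorem~\ref{L_1 L_2^R upper bound} on these $M$ and $N$ and show that, after the two collapses identified there, $B$ has $m2^n-2^{n-1}-m+1$ pairwise inequivalent reachable states. This splits into two tasks. \emph{Task 1 (reachability):} show that every ``legal'' pair $\langle i,S\rangle$ with $i\in Q_M$, $S\subseteq Q_N$, and $n-1\in S$ whenever $i=m-1$, is reachable from $s_B$. I would first reach $\langle 0,\emptyset\rangle$, then use $a$-powers together with the reset behaviour of the construction (each passage through $m-1$ inserts $F_N=\{n-1\}$ into the $N$-coordinate) and the $b,c$-transitions to build up, by induction on $|S|$, every required subset $S$ while parking the $M$-coordinate at the desired $i$. \emph{Task 2 (distinguishability):} given two distinct legal pairs $\langle i,S\rangle\neq\langle i',S'\rangle$ that are not both of the form $\langle\cdot,Q_N\rangle$, I would exhibit a separating word: if $S\neq S'$, pick $x$ in the symmetric difference $S\,\triangle\,S'$ and use the built-in structure of $N$ to find a word whose action on the $N$-coordinate carries precisely the subsets containing $x$ into an accepting subset of $A$ (one containing $s_N=0$); if $S=S'$ but $i\neq i'$, first read a word driving $M$ so that one of the two coordinates passes through $m-1$ while the other does not, thereby creating a difference in the $N$-coordinate, and then apply a separating word of the previous kind.

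The main obstacle is Task 1: because the $N$-coordinate evolves through the subset construction of a \emph{reversed} automaton and is periodically reset to $\{n-1\}$ each time $M$ enters $m-1$, one must choose $N$'s transitions carefully enough that arbitrary subsets remain attainable, and then organize the induction so that the auxiliary moves used to grow $S$ do not disturb the already-placed $M$-coordinate beyond what can be corrected by reading further $a$'s. The bookkeeping of which pairs $\langle i,S\rangle$ are reached, and the verification that the resulting collection has exactly $m2^n-2^{n-1}-m+1$ members (all $m2^n$ pairs, minus the $2^{n-1}$ illegal pairs $\langle m-1,S\rangle$ with $n-1\notin S$, which are genuinely unreachable, minus the $m-1$ duplicates of the accepting sink $\langle\cdot,Q_N\rangle$) is the bulk of the argument; Task 2 is then a routine application of the distinguishing properties designed into the witnesses. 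The omitted case $n>2$ for $M$ with $m\ge2$ would be handled along the same lines with the same reset-and-grow strategy.
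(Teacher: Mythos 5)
Your overall plan is the paper's plan: pick witness DFAs over $\{a,b,c\}$ with single final states so that the upper bound of Theorem~\ref{L_1 L_2^R upper bound} specializes to $m2^n-2^{n-1}-m+1$, run the product construction $B$, discard the $2^{n-1}$ pairs $\langle m-1,S\rangle$ missing the inserted final-state set and merge the $m$ copies of $\langle\cdot,Q_N\rangle$, then prove reachability by induction on $|S|$ and distinguishability by cases. The arithmetic in your count is right. But as it stands the proposal has a genuine gap: the witnesses are never actually constructed. You leave the transitions of $b$ and $c$ ``to be tuned,'' and you explicitly defer the reachability induction, which you yourself identify as the bulk of the argument. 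For a lower-bound theorem of this type the construction and its verification \emph{are} the proof; a statement that suitable transitions exist is exactly what has to be demonstrated.

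Two concrete points of concern beyond the missing details. First, your choice of $a$ as a simultaneous cyclic permutation of both $M$ and $N$ works against you: every time the $M$-coordinate passes through $m-1$ the construction forcibly inserts $F_N$ into the $N$-coordinate, so with $a$ alone you cannot park the $M$-coordinate while growing or shrinking $S$, and reaching pairs $\langle i,S\rangle$ with the distinguished state absent from $S$ becomes delicate. The paper avoids this by reserving two letters ($a$ and $b$) that act as the identity on $M$, so that all subset manipulation happens without moving $i$ and without triggering resets, and only $c$ advances $M$; if you keep your $a$, you must ensure one of $b,c$ fixes $M$ and still generates enough of the subset lattice, which you have not argued. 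Second, your distinguishability sketch for $S\neq S'$ ignores the case where one of the two states has first coordinate $m-1$: there the forced insertion of $N$'s reversed-automaton initial set into the second coordinate can make both resulting states accepting and defeat the naive separating word built from $x\in S\,\triangle\,S'$. The paper needs a separate case analysis here (with a further special treatment of $n=2$), and your proposal does not account for it. Until the transitions are pinned down and these two issues are handled, the proof is not complete.
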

\begin{proof}
Let $M=(Q_M,\Sigma , \delta_M , 0, \{m-1\} )$ be a DFA,
where $Q_M = \{0,1,\ldots ,m-1\}$, $\Sigma = \{a,b,c\}$, and the transitions are given as:
\begin{itemize}
\item $\delta_M(i, x) = i \mbox{, } i=0, \ldots , m-1, x \in \{a,b\},$
\item $\delta_M(i, c) = i+1 \mbox{ mod }m \mbox{, } i=0, \ldots , m-1.$
\end{itemize}

Let $N=(Q_N,\Sigma , \delta_N , 0, \{0\} )$ be a DFA,
where $Q_N = \{0,1,\ldots ,n-1\}$, $\Sigma = \{a,b,c\}$, and the transitions are given as:
\begin{itemize}
\item $\delta_N(0, a) = n-1 \mbox{, } \delta_N(i, a) = i-1 \mbox{, } i=1, \ldots , n-1,$
\item $\delta_N(0, b) = 1 \mbox{, } \delta_N(i, b) = i \mbox{, } i=1, \ldots , n-1,$
\item $\delta_N(0, c) = 1 \mbox{, } \delta_N(1, c) = 0\mbox{, } \delta_N(j, c) = j \mbox{, } j=2, \ldots , n-1\mbox{, if $n \geq 3.$}$
\end{itemize}

Now we design a DFA $A=(Q_A, \Sigma , \delta_A , \{0\}, F_A )$,
where $Q_A = \{q \mid q\subseteq Q_N\}$, $\Sigma = \{a,b,c\}$, $F_A
= \{q \mid 0\in q \mbox{, }q\in Q_A\}$, and the transitions are
defined as:
\[
\delta_A(p, e) = \{j \mid \delta_N(j, e)=i\mbox{, }i\in p\} \mbox{, } p\in Q_A\mbox{, } e \in \Sigma.
\]

It has been shown in~\cite{YuZhSa94} that $A$ is a minimal DFA that
accepts $L(N)^R$. Let $B=(Q_B, \Sigma = \{a,b,c\}, \delta_B , s_B  =  \langle 0,\emptyset \rangle, F_A)$ be
another DFA, where
\begin{eqnarray*}
Q_B & = & \{\langle p,q\rangle  \mid p\in Q_M-\{ m-1\} \mbox{, }q\in Q_A-\{ Q_N\}\}\cup \{\langle 0,Q_N\rangle \}\\
& & \qquad \cup \,\, \{\langle m-1, q\rangle \mid q\in Q_A-\{ Q_N\}\mbox{, }\{0\}\in q\},\\
F_B & = & \{\langle p,q\rangle \mid q\in F_A \mbox{, } \langle
p,q\rangle\in Q_B\},
\end{eqnarray*}
and for each state $\langle p,q\rangle\in Q_B$ and each letter $e\in
\Sigma,$

\begin{eqnarray*}
\delta_B(\langle p,q\rangle, e) = \left\{
\begin{array}{l l}
  \langle p',q'\rangle & \mbox{if }\delta_M(p, e)=p'\neq m-1\mbox{, } \delta_A(q, e)=q'\neq Q_N,\\
  \langle p',q'\rangle &  \mbox{if }\delta_M(p, e)=p'=m-1\mbox{, }  \\
   & \qquad \delta_A(q, e)=r'\mbox{, }\mbox{$q' =r'\cup \{0\}$, $q' \neq Q_N$,}\\
  \langle 0,Q_N\rangle & \mbox{if }\delta_M(p, e)=m-1\mbox{, } \delta_A(q, e)=r'\mbox{, $r' \cup \{0\} = Q_N$,} \\
  \langle 0,Q_N\rangle & \mbox{if }\delta_M(p, e)\neq m-1\mbox{, } \delta_A(q, e)=
  Q_N.
\end{array} \right.
\end{eqnarray*}
As we mentioned in last proof, all the states (two-tuples) ending
with $Q_N$ are equivalent. So here, we replace them with one state:
$\langle 0,Q_N\rangle$. And all the states starting with $m-1$ must
end with $j\in Q_A$ such that $0\in j$. It is easy to see that $B$
accepts the language $L(M)L(N)^R$. It has $m2^n-2^{n-1}-m+1$ states.
Now we show that $B$ is a minimal DFA.

(I) We first show that every state $\langle i,j\rangle \in Q_B$ is reachable by induction on the size of $j$. Let $k = |j|$ and $k \le n-1$.
Note that state $\langle 0, Q_N\rangle$ is reachable from state $\langle 0 , \emptyset \rangle$ over string $c^mb(ab)^{n-2}$.

When $ k = 0$, $i$ should be less than $m-1$
according to the definition of $B$.
Then, there always exists a string
$w=c^i$ such that $\delta_B(\langle 0,\emptyset \rangle, w) = \langle i,\emptyset \rangle$.

{\bf Basis ($k = 1$):} State $\langle m-1, \{0\} \rangle$ can be reached from state $\langle m-2, \emptyset \rangle$ on a letter $c$.
State $\langle 0 , \{0\} \rangle$ can be reached from state $\langle m-1, \{0\} \rangle$ on string $ca^{n-1}$.
Then, for $ i \in \{1, \ldots, m-2\}$, state $\langle i , \{0\} \rangle$ is reachable from state $\langle i-1, \{0\} \rangle$ on string $ca^{n-1}$.
Moreover, for $ i \in \{0, \ldots, m-2\}$, state $\langle i,j\rangle$ is reachable from state $\langle i, \{0\} \rangle$ on string $a^j$.

{\bf Induction steps:} Assume that all states $\langle i,j\rangle$ such that $|j| < k$ are reachable.
Then, we consider the states $\langle i,j\rangle$ where $|j| = k$.
Let $j = \{j_1, j_2, \ldots, j_k\}$ such that $ 0 \le j_1 < j_2 < \ldots < j_k \le n-1$.
We consider the following four cases:

1. $j_1 = 0$ and $j_2 = 1$.
State $\langle m-1, \{0,1,j_3,\ldots,j_k\}\rangle$ is reachable from state $\langle m-2, \{0,j_3,\ldots,j_k\}\rangle$ on a letter $c$.
Then, for $i \in \{0,\ldots, m-2\}$, state $\langle i, j\rangle$ can be reached from state $\langle m-1, \{0,1,j_3,\ldots,j_k\} \rangle$ on string $c^{i+1}$.

2. $i = 0$, $j_1 = 0$, and $j_2 > 1$.
State $\langle 0,j \rangle$ can be reached as follows:
\[
 \langle 0, \{j_1, j_2, \ldots, j_k \} \rangle = \delta_B(\langle m-2, \{j_3-j_2+1, \ldots, j_k - j_2 + 1, n - j_2 +1\} \rangle, c^2a^{j_2-1}).
\]

3. $i = 0$ and $j_1 > 0$.
State $\langle 0,j \rangle$ is reachable from state $\langle 0, \{0, j_2 - j_1, \ldots, j_k - j_1\} \rangle$ over string $a^{j_1}$.

4. We consider the remaining states.
For $ i \in \{ 1, \ldots, m-1\}$, state $\langle i,j \rangle$ such that $j_1 = 0$ and $j_2 > 1$ can be reached from state $\langle i-1, \{1, j_2, \ldots, j_k\} \rangle$ on a letter $c$, and, for $i \in \{ 1, \ldots, m-2\}$, state $\langle i,j \rangle$ such that $j_1 > 0$ is reachable from state $\langle i, \{0, j_2 - j_1, \ldots, j_k - j_1\} \rangle$ over string $a^{j_1}$.
Recall that we do not have states $\langle i ,j \rangle$ such that $i = m-1$ and $j_1 > 0$.

(II) We then show that any two different states $\langle i_1,j_1\rangle$ and $\langle i_2,j_2\rangle$ in $Q_B$ are distinguishable.
Let us consider the following three cases:

1. $j_1\neq j_2$. Without loss of generality, we may assume that
$|j_1|\geq |j_2|$. Let $x\in j_1-j_2$. We don't need to consider the
case when $x=0$, because, if $0\in j_1-j_2$, then the two states are
clearly in different equivalent classes. For $0<x\leq n-1$, there
always exists a string $t$ such that $\delta_B(\langle
i_1,j_1\rangle, t)\in F_B$ and $\delta_B(\langle i_2,j_2\rangle, t)
\notin F_B,$
where
\begin{eqnarray*}
t = \left\{
\begin{array}{l l}
  a^{n-x} & \mbox{if $i_2\neq m-1$, $j_1\neq j_2$,}\\
  a^{n-x-1}ca & \mbox{if  $i_2= m-1$, $j_1\neq j_2$, $n > 2$,}\\
  c & \mbox{if $i_2= m-1$, $j_1\neq j_2$, $n=2$.}
\end{array} \right.
\end{eqnarray*}
Note that, under the second condition, after reading the prefix $a^{n-x-1}$ of $t$, state $n-1$ cannot be in the second component of the resulting state.
This is because $x \not\in j_2$.

Also note that when $n=2$, $j_1$, $j_2\in\{Q_N,\{0\},\{1\}\}$, where
$Q_N=\{0,1\}$. Moreover, when $i_2=m-1$, $\langle i_2,j_2\rangle$
can only be $\langle m-1,\{0\}\rangle$. Due to the definition of
$B$, we have that, for $s\geq 1$, $\langle s,Q_N\rangle \notin Q_B$.
Thus, it is easy to see that $\langle i_1,j_1\rangle$ is either
$\langle i_1,\{1\}\rangle$ or $\langle 0,\{0,1\}\rangle$. When
$\langle i_1,j_1\rangle=\langle i_1,\{1\}\rangle$, $0\in j_1-j_2$,
so the two states are distinguishable. When $\langle
i_1,j_1\rangle=\langle 0,\{0,1\}\rangle$, a string $c$ can
distinguish them because $\delta_B(\langle 0,\{0,1\}\rangle, c) \in
F_B$ and $\delta_B(\langle m-1,\{0\}\rangle, c) \notin F_B$.

2. $j_1=j_2\neq Q_N$, $i_1\neq i_2$. Without loss of generality, we
may assume that $i_1> i_2$. In this case, $i_2\neq m-1$. Let $x\in
Q_N-j_1$. There always exists a string $u=a^{n-x+1}bc^{m-1-i_1}$
such that $\delta_B(\langle i_1,j_1\rangle, u)\in F_B$ and
$\delta_B(\langle i_2,j_2\rangle, u) \notin F_B$.

Let $\langle i_1,j_1'\rangle$ and $\langle i_2,j_1'\rangle$ be two
states reached from states $\langle i_1,j_1\rangle$ and $\langle
i_2,j_2\rangle$ on the prefix $a^{n-x+1}$ of $w$, respectively. We
notice that state $1$ of $N$ cannot be in $j_1'$. Then, after
reading another letter $b$, we reach states $\langle
i_1,j_1''\rangle$ and $\langle i_2,j_1''\rangle$, respectively. It
is easy to see that states $0$ and $1$ of $N$ are not in $j_1''$.
Lastly, after reading the remaining string $c^{m-1-i_1}$ from state
$\langle i_1,j_1''\rangle$, the first component of the resulting
state is the final state of DFA $M$ and therefore its second
component contains state $0$ of DFA $N$. In contrast, the second
component of the resulting state reached from state $\langle
i_2,j_1''\rangle$ on the same string cannot contain state $0$, and
hence it is not a final state of $B$. Note that this includes the
case that $j_1=j_2= \emptyset$, $i_1\neq i_2$.

3. We don't need to consider the case $j_1=j_2= Q_N$, because there is
only one state in $Q_B$ which ends with $Q_N$. It is $\langle
0,Q_N\rangle$.

Since all the states in $B$ are reachable and pairwise distinguishable, DFA
$B$ is minimal.
Thus, any DFA accepting $L(M)L(N)^R$ needs at least $m2^n-2^{n-1}-m+1$ states.
\end{proof}

This result gives a lower bound for the state complexity of
$L(M)L(N)^R$ when $m,n \ge 2$. It coincides with the upper bound when $k_1=1$ and
$k_2=1$.
In the rest of this section, we consider the remaining cases when either $m =1$ or $n=1$.
We first consider the case when $m=1$ and $n\geq 3$.
We have $L_1=\emptyset$ or $L_1=\Sigma^*$.
When $L_1=\emptyset$, for any $L_2$, a 1-state DFA
always accepts $L_1L_2^R$, since $L_1L_2^R=\emptyset$.
The following theorem provides a lower bound for the latter case.

\begin{theorem}
\label{L_1 L_2^R lower bound m=1 n>=3} Given an integer $n\geq 3$,
there exists a DFA $M$ of $1$ state and a DFA $N$ of $n$ states such
that any DFA accepting $L(M)L(N)^R$ needs at least $2^{n-1}$ states.
\end{theorem}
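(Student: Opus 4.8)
The plan is to exhibit a fixed $1$-state DFA $M$ with $L(M) = \Sigma^*$ and an $n$-state DFA $N$ such that the minimal DFA for $\Sigma^* L(N)^R$ has at least $2^{n-1}$ states. Since $L(M) = \Sigma^*$, we have $\Sigma^* L(N)^R = \Sigma^* \cdot L(N)^R$; intuitively a DFA for this language must, on reading an input word $w$, keep track of the set of all states of the reversal-NFA $N'$ reachable from its initial set by some suffix of $w$. So the natural automaton to analyze is obtained by subset construction on $N'$ with the extra feature that the initial set $F_N$ is "re-injected" at every step (because any position in $w$ can be the start of the $L(N)^R$ part). The target count $2^{n-1}$ strongly suggests taking $N$ with a single final state, say state $0$, so that $F_N = \{0\}$ is re-injected at each step; then the reachable subsets are exactly those containing $0$, of which there are $2^{n-1}$.

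Concretely I would reuse (a variant of) the DFA $N$ from the proof of Theorem~\ref{L_1 L_2^R lower bound}: $Q_N = \{0,1,\dots,n-1\}$ over $\Sigma = \{a,b,c\}$ with $\delta_N(0,a) = n-1$, $\delta_N(i,a) = i-1$ for $i \ge 1$; $\delta_N(0,b)=1$, $\delta_N(i,b)=i$ for $i\ge 1$; and $\delta_N(0,c)=1$, $\delta_N(1,c)=0$, $\delta_N(j,c)=j$ for $j \ge 2$. The reversal NFA $N'$ has single initial state $0$ and single final state $0$. First I would build the DFA $B$ for $\Sigma^* L(N)^R$ whose states are the subsets $q \subseteq Q_N$ with $0 \in q$ (so $|Q_B| = 2^{n-1}$), with transition $\delta_B(q,e) = \{\, j \mid \delta_N(j,e) \in q \,\} \cup \{0\}$ and final states those $q$ containing $0$ — but here every state contains $0$, so I must instead make the "accepting" information depend on whether an $L(N)^R$-factor has actually been completed; the cleaner route is to keep the final-state set as $\{q : 0 \in q\}$ and note that, because of the re-injection, $B$ accepts $\Sigma^* L(N)^R = \Sigma^*$ trivially unless $\lambda \notin L(N)$. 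Since $0$ is the only final state and it is the start state, $\lambda \in L(N)$, hence $\Sigma^* L(N)^R = \Sigma^*$, which is wrong. So I would instead pick $N$ whose single final state is \emph{not} its start state (to avoid $\lambda \in L(N)$), e.g. make state $1$ (not state $0$) the sole final state; then $F_N = \{1\}$, $1$ gets re-injected, the reachable subsets of $B$ are exactly those containing $1$ ($2^{n-1}$ of them), and final states of $B$ are those containing the start state $0$ of $N$.

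The two things left to verify are then: (i) \textbf{reachability} — every subset $q \subseteq Q_N$ with $1 \in q$ is reached from the start state $\{1\}$; I would prove this by induction on $|q|$, using $a$ to cyclically shift a subset, $c$ to toggle the pair $\{0,1\}$, and $b$ to adjoin the "start-injection" element, essentially mimicking the reachability argument in the proof of Theorem~\ref{L_1 L_2^R lower bound} but restricted to the fragment with first component fixed; and (ii) \textbf{distinguishability} — any two distinct such subsets $q_1 \neq q_2$ are distinguished: pick $x \in q_1 \triangle q_2$, and choose a word $w$ with $\delta_N(\cdot,w)$ mapping some fixed state (the start state $0$) to $x$ while keeping the re-injected $1$ harmless; then $w$ lands exactly one of $\delta_B(q_1,w), \delta_B(q_2,w)$ in $F_B$. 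The $a$-transitions (an $n$-cycle on $\{0,\dots,n-1\}$) make this routing argument straightforward, exactly as in the earlier proof.

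The main obstacle is bookkeeping around the single-final-state choice: one must ensure simultaneously that $\lambda \notin L(N)$ (so that $\Sigma^* L(N)^R \neq \Sigma^*$ and the automaton is non-trivial), that the re-injected initial set of $N'$ is a singleton (so the reachable part of $B$ has exactly $2^{n-1}$ states, not fewer and not more), and that the transitions of $N$ are still rich enough — via the $a$-cycle together with the local $c$-toggle and $b$-action — to make all $2^{n-1}$ subsets both reachable and pairwise distinguishable. Once the DFA $N$ is fixed so all three hold, the reachability induction and the distinguishing-word construction go through by the same techniques used in Theorem~\ref{L_1 L_2^R lower bound}, and the case $n \geq 3$ (rather than $n \geq 2$) is precisely what is needed so that the $c$-transition $\delta_N(j,c) = j$ for $j \ge 2$ is available to fix the "middle" states while $\{0,1\}$ is being manipulated.
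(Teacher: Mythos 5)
Your high-level strategy is the right one (and, since the paper omits its proof of this theorem, it is the natural one to try): take $L(M)=\Sigma^*$, give $N$ a single final state $f\neq s_N$ so that $\lambda\notin L(N)$, and analyze the DFA whose states are subsets of $Q_N$ containing $f$, with transition ``apply the reverse-image map of $N'$, then re-inject $f$,'' aiming to show all $2^{n-1}$ such subsets are reachable and pairwise distinguishable. Your distinguishability sketch is fine: with the $a$-cycle, reading $a^{n-x}$ from a set $q$ yields $(q+(n-x))\cup\{1,\dots,n-x\}$, which contains $0$ iff $x\in q$, so any $x$ in the symmetric difference separates the two states.

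The genuine gap is reachability: the concrete witness you name (the $N$ of Theorem~\ref{L_1 L_2^R lower bound} with the final state moved from $0$ to $1$) does \emph{not} reach all $2^{n-1}$ subsets. Under the constant re-injection of $\{1\}$, both $b$ and $c$ degenerate to the single operation $S\mapsto S\cup\{0\}$ on every reachable set (since $1\in S$ always, $\delta_{N'}(1,b)=\{0,1\}$ and $\delta_{N'}(1,c)=\{0\}$ force $0$ into the image, and nothing else changes), while $a$ gives $S\mapsto (S+1)\cup\{1\}$. Consequently every reachable set other than $\{1\}$ that avoids $0$ must arise from an $a$-step and hence contains $\{1,2\}$; by induction the set $\{1,3\}$ (and likewise $\{0,1,3\}$) is unreachable for every $n\geq 4$ --- for $n=4$ only $6$ of the $8$ required subsets are reached. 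The reason ``the same techniques as Theorem~\ref{L_1 L_2^R lower bound}'' do not carry over is precisely that its reachability argument (cases 2 and 4) relies on \emph{delaying} the injection of the final-state set by routing the $M$-component through non-final states and only injecting upon entering $m-1$; with $m=1$ and $L(M)=\Sigma^*$ the injection happens at every step and that maneuver is unavailable. To close the gap you must redesign the transitions of $N$ (not just relocate its final state) so that the generated monoid of subset maps, \emph{composed with the forced re-injection of $f$ at each step}, still reaches every subset containing $f$; as it stands the witness fails and the bound $2^{n-1}$ is not established.
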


Now, we consider the case when $m = 1$ and $n = 2$.
\begin{lemma}
\label{L_1 L_2^R lower bound m=1 n=2}There exists a $1$-state DFA
$M$ and a $2$-state DFA $N$ such that any DFA accepting $L(M)L(N)^R$
needs at least $2$ states.
\end{lemma}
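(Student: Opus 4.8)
The plan is to pin down a single pair of witness automata and then observe directly that the resulting language is one of the languages that genuinely require two states. First I would use the fact that $M$ has only one state, so $L(M)$ must be $\emptyset$ or $\Sigma^*$; the choice $L(M)=\emptyset$ is useless because then $L(M)L(N)^R=\emptyset$ is accepted by a one-state DFA, so I would take $M$ to be the one-state DFA whose unique state is both initial and final, giving $L(M)=\Sigma^*$. For simplicity I would work over the unary alphabet $\Sigma=\{a\}$ (any alphabet works, but unary makes the reversal transparent).

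Next I would fix $N$ to be the two-state DFA with state set $\{0,1\}$, initial state $0$, unique final state $1$, and $\delta_N(0,a)=\delta_N(1,a)=1$, so that $L(N)=a^+$. Since the alphabet is unary we have $w^R=w$ for every word, hence $L(N)^R=L(N)=a^+$, and therefore $L(M)L(N)^R=\Sigma^*a^+=a^+$.

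Finally I would establish the lower bound: a complete one-state DFA accepts only $\emptyset$ or $\Sigma^*$, whereas $a^+$ is neither, since it contains $a$ (so it is nonempty) and does not contain $\lambda$ (so it is not $\Sigma^*$). Hence every DFA accepting $L(M)L(N)^R$ has at least two states; tightness follows from the construction in Theorem~\ref{L_1 L_2^R upper bound}, which for $m=1$, $n=2$, $k_1=k_2=1$ yields exactly a two-state DFA (equivalently, the obvious two-state DFA for $a^+$ works).

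There is no real obstacle in this proof; the only point that needs a moment's care is the choice of $N$. One must simultaneously ensure $L(N)\neq\emptyset$ — otherwise the catenation is empty and a single state suffices — and $\lambda\notin L(N)$ — otherwise, since $\lambda^R=\lambda$, we would get $\Sigma^*=\Sigma^*\{\lambda\}\subseteq L(M)L(N)^R$, forcing $L(M)L(N)^R=\Sigma^*$, again a one-state language. The automaton above meets both requirements, so the argument closes immediately.
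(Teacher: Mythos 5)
Your proof is correct. The paper omits its own proof of this lemma (it is listed among the proofs dropped for page limitations), so there is nothing to compare against line by line; your argument is a complete and valid replacement. The witness pair is well chosen: $M$ with its single state final gives $L(M)=\Sigma^*$, the two-state unary $N$ gives $L(N)=a^+$, and since reversal is the identity on unary words, $L(M)L(N)^R=a^+$, which a complete one-state DFA cannot accept because such a DFA recognizes only $\emptyset$ or $\Sigma^*$. Both witnesses are minimal DFAs of the required sizes, so the example genuinely witnesses the claimed lower bound of $2$, matching the value $m2^n-2^{n-1}-m+1=2$ from Theorem~\ref{L_1 L_2^R upper bound}. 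Your closing remark correctly identifies the only delicate point, namely that $N$ must be chosen with $L(N)\neq\emptyset$ and $\lambda\notin L(N)$ to avoid collapsing the product to a one-state language.
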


Lastly, we consider the case when $m\geq 1$ and $n=1$. When
$L_2=\emptyset$, for any $L_1$, a 1-state DFA always accepts $L_1L_2^R=\emptyset$.
When $L_2=\Sigma ^* $, $L_1L_2^R=L_1\Sigma ^*$, since $(\Sigma
^*)^R=\Sigma ^*$. Due to Theorem 3 in~\cite{YuZhSa94}, which states that, for any DFA $A$ of size $m \ge 1$, the
state complexity of $L(A)\Sigma^*$ is $m$, the following is immediate.
\begin{corollary}
\label{L_1 L_2^R lower bound m>=1 n=1} Given an integer $m\geq 1$,
there exists an $m$-state DFA $M$ and a $1$-state DFA $N$ such that
any DFA accepting $L(M)L(N)^R$ needs at least $m$ states.
\end{corollary}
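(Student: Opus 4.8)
The plan is to reduce this corollary directly to the two facts already recalled in the surrounding text: a $1$-state complete DFA accepts only $\emptyset$ or $\Sigma^*$, and reversal fixes $\Sigma^*$. First I would take $N$ to be the $1$-state DFA whose unique state is both initial and final, so that $L(N) = \Sigma^*$ and hence $L(N)^R = (\Sigma^*)^R = \Sigma^*$. Consequently, for any choice of $M$ we have $L(M)L(N)^R = L(M)\Sigma^*$, which converts the problem of lower-bounding the state complexity of $L(M)L(N)^R$ into the (already settled) problem of lower-bounding that of $L(M)\Sigma^*$.

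Next I would invoke Theorem 3 of~\cite{YuZhSa94}, which, read as a worst-case statement, guarantees an $m$-state DFA $M$ whose language $L(M)$ makes the minimal complete DFA of $L(M)\Sigma^*$ have exactly $m$ states. Combining this with the identity above yields an $m$-state DFA $M$ and a $1$-state DFA $N$ with $sc(L(M)L(N)^R) = sc(L(M)\Sigma^*) = m$, which is exactly the claim.

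There is essentially no obstacle here; the only point requiring a moment's care is that the cited bound on $sc(L(A)\Sigma^*)$ is a worst-case bound, so one must select the specific witness $M$ attaining it rather than an arbitrary $m$-state DFA (taking $L(M) = \Sigma^*$, for instance, would collapse everything to a single state). If an explicit witness is preferred over the citation, a unary DFA $M$ accepting $\{a^{m-1}\}$ works: then $L(M)\Sigma^* = \{a^k \mid k \ge m-1\}$, whose minimal DFA has the $m$ states $0, 1, \ldots, m-1$ with $m-1$ the unique absorbing final state, and a routine reachability-and-distinguishability check confirms that $m$ states are necessary.
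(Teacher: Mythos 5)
Your main argument is exactly the paper's: the corollary is stated there as ``immediate'' from the observations that $L(N)=\Sigma^*$ gives $L(M)L(N)^R=L(M)\Sigma^*$ (since $(\Sigma^*)^R=\Sigma^*$) and from Theorem~3 of~\cite{YuZhSa94}, so your proposal is correct and takes the same route. One small caveat on your optional explicit witness: the minimal \emph{complete} DFA for $\{a^{m-1}\}$ over $\{a\}$ has $m+1$ states (it needs a dead state), so you should instead take, e.g., the $m$-state DFA for $a^{m-1}a^*$ or for $(a^m)^*a^{m-1}$, either of which yields $L(M)\Sigma^*=\{a^k\mid k\ge m-1\}$ with state complexity $m$.
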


After summarizing Theorems~\ref{L_1 L_2^R upper bound},~\ref{L_1 L_2^R lower bound}, and~\ref{L_1 L_2^R lower
bound m=1 n>=3}, Lemma~\ref{L_1 L_2^R lower bound m=1 n=2} and
Corollary~\ref{L_1 L_2^R lower bound m>=1 n=1}, we obtain the state
complexity of the combined operation $L_1L_2^R$.
\begin{theorem}
\label{Tight bound of L_1 L_2^R} For any integer $m\geq 1$, $n\geq
1$, $m2^n-2^{n-1}-m+1$ states are both necessary and sufficient in
the worst case for a DFA to accept $L(M)L(N)^R$, where $M$ is an
$m$-state DFA and $N$ is an $n$-state DFA.
\end{theorem}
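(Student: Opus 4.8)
The plan is to derive Theorem~\ref{Tight bound of L_1 L_2^R} by matching the upper bound of Theorem~\ref{L_1 L_2^R upper bound} against the collection of lower bounds established above, checking that both sides reduce to the single value $m2^n-2^{n-1}-m+1$ for all $m,n\ge 1$. There is no new automaton to construct here; the work is purely in the bookkeeping of the parameter ranges.

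For sufficiency I would start from Theorem~\ref{L_1 L_2^R upper bound}, which bounds the state complexity of $L_1L_2^R$ by $m2^n-k_1 2^{n-k_2}(2^{k_2}-1)-m+1$ when $L_1$ has an $m$-state DFA with $k_1$ final states and $L_2$ an $n$-state DFA with $k_2$ final states. The key observation is that $2^{n-k_2}(2^{k_2}-1)=2^n-2^{n-k_2}$ is nondecreasing in $k_2$, so over $1\le k_2\le n$ it is minimized, at $k_2=1$, by $2^{n-1}$; combined with $k_1\ge 1$ this makes the subtracted term $k_1(2^n-2^{n-k_2})$ at least $2^{n-1}$, so the bound never exceeds $m2^n-2^{n-1}-m+1$. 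The one case this argument does not directly cover is $k_1=0$ (that is, $L_1=\emptyset$, where naively maximizing the formula would give a larger value); here I would simply observe that $L_1L_2^R=\emptyset$ needs a single state and that $1\le m2^n-2^{n-1}-m+1$ for all $m,n\ge 1$, since $m(2^n-1)\ge 2^n-1\ge 2^{n-1}$.

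For necessity I would go through the ranges of $(m,n)$ and, in each, invoke the corresponding witness and verify that its stated bound equals $m2^n-2^{n-1}-m+1$: Theorem~\ref{L_1 L_2^R lower bound} for $m\ge 2$ and $n\ge 2$ (which gives the value outright); Theorem~\ref{L_1 L_2^R lower bound m=1 n>=3} for $m=1$ and $n\ge 3$ (its $2^{n-1}$ equals $2^n-2^{n-1}-1+1$); Lemma~\ref{L_1 L_2^R lower bound m=1 n=2} for $m=1$ and $n=2$ (its $2$ equals $2^2-2^1-1+1$); and Corollary~\ref{L_1 L_2^R lower bound m>=1 n=1} for $n=1$ and $m\ge 1$ (its $m$ equals $2m-1-m+1$, and this also disposes of $m=n=1$, where the value is $1$ and any DFA needs at least one state). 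Putting the two halves together, the upper and lower bounds coincide in every admissible case, which is the claim. I expect the only real care needed to be in the case split itself --- ensuring that the degenerate instances ($L_1$ or $L_2$ empty or universal, and $n=1$) are not silently omitted and that the $k_1=0$ case is handled outside the formula-maximization argument; the remainder is a routine substitution check.
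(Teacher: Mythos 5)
Your proposal is correct and matches the paper's own treatment, which likewise obtains Theorem~\ref{Tight bound of L_1 L_2^R} purely by summarizing Theorem~\ref{L_1 L_2^R upper bound} together with Theorems~\ref{L_1 L_2^R lower bound} and~\ref{L_1 L_2^R lower bound m=1 n>=3}, Lemma~\ref{L_1 L_2^R lower bound m=1 n=2}, and Corollary~\ref{L_1 L_2^R lower bound m>=1 n=1}; your arithmetic checks on each parameter range are all right. One tiny oversight: besides $k_1=0$, the case $k_2=0$ (i.e.\ $L_2=\emptyset$) also escapes the formula-minimization argument, but it is disposed of by exactly the same one-line observation that the resulting language is empty.
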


\section{Conclusion}\label{sec:conclusion}
Motivated by their applications, we have studied the state
complexities of two particular combinations of operations:
catenation combined with star and catenation combined with reversal.
We proved that they are significantly lower than the compositions of
the state complexities of their individual participating operations.
Thus, this paper shows further that the state complexity of a
combination of operations has to be studied individually.

\section*{Acknowledgement}
We would like to thank the anonymous referees of DCFS 2010 for their careful reading and valuable
suggestions.

\bibliographystyle{eptcs} 

\end{document}